\crefname{algorithm}{Algorithm\@}{Algorithms\@}
\crefname{figure}{Figure\@}{Figures\@}
\crefname{table}{Table\@}{Tables\@}
\crefname{section}{Section\@}{Sections\@}
\crefname{lemma}{Lemma\@}{Lemmas\@}
\crefname{theorem}{Theorem\@}{Theorems\@}
\crefname{equation}{Equation\@}{Equations\@}
\crefname{line}{Line\@}{Lines\@}
\crefname{conjecture}{Conjecture\@}{Conjectures\@}
\crefname{appendix}{Section\@}{Sections\@}
\newcommand{\lt}{\left}
\newcommand{\rt}{\right}
\newcommand{\M}[1]{\mathbf{\MakeUppercase{#1}}}
\newcommand\A{\M{A}}
\newcommand\B{\M{B}}
\newcommand\MC{\M{C}}
\newcommand\Q{\M{Q}}
\newcommand\R{\M{R}}
\newcommand\V{\M{V}}
\newcommand\T{\M{T}}
\newtheorem{theorem}{Theorem}
\newtheorem{lemma}{Lemma}
\newcommand\rQR{\textsc{rec\-/qr}}
\newcommand\QREG{\textsc{qr\-/eg}}
\newcommand\TSQR{\textsc{tsqr}}
\newcommand\CAQR{\textsc{caqr}}
\newcommand\ODQR{\textsc{\oldstylenums{1}d-caqr\-/eg}}
\newcommand\TDQR{\textsc{\oldstylenums{3}d-caqr\-/eg}}
\newcommand\ODMM{\textsc{\oldstylenums{1}dmm}}
\newcommand\TDMM{\textsc{\oldstylenums{3}dmm}}
\newcommand\MM{\textsc{mm}}
\newcommand\HQR{\textsc{\oldstylenums{1}d\-/house}}
\newcommand\HQRb{\textsc{\oldstylenums{2}d\-/house}}
\newcommand\Scatter{\textsc{scatter}}
\newcommand\Gather{\textsc{gather}}
\newcommand\Reduce{\textsc{reduce}}
\newcommand\Broadcast{\textsc{broadcast}}
\newcommand\AllGather{\textsc{all\-/gather}}
\newcommand\AllReduce{\textsc{all\-/reduce}}
\newcommand\AlltoAll{\textsc{all-to-all}}
\newcommand\ReduceScatter{\textsc{reduce\-/scatter}}
\newcommand\II{\mathcal{I}}
\newcommand\JJ{\mathcal{J}}
\newcommand\KK{\mathcal{K}}
\newcommand\cA{\mathcal{A}}
\newcommand\cB{\mathcal{B}}
\newcommand\cC{\mathcal{C}}
\title{A 3D Parallel Algorithm for QR Decomposition}
\author{%
\alignauthor Grey Ballard\\
\affaddr{Wake Forest University}\\
\affaddr{Winston Salem, NC, USA}\\
\email{ballard@wfu.edu}
\alignauthor James Demmel\\
\affaddr{University of California}\\
\affaddr{Berkeley, CA, USA}\\
\email{demmel@berkeley.edu}
\alignauthor Laura Grigori\\
\affaddr{INRIA Paris-Rocquencourt}\\
\affaddr{Paris, France}\\
\email{laura.grigori@inria.fr}
\and
\alignauthor Mathias Jacquelin\\
\affaddr{Lawrence Berkeley Natl.\ Lab.}\\
\affaddr{Berkeley, CA, USA}\\
\email{mjacquelin@lbl.gov}
\alignauthor Nicholas Knight\\
\affaddr{NYU--Courant}\\
\affaddr{New York, NY, USA}\\
\email{nknight@nyu.edu}
}
\date{}
\begin{document}

\maketitle
 
\begin{abstract}
Interprocessor communication often dominates the runtime of large matrix computations.
We present a parallel algorithm for computing QR decompositions whose bandwidth cost (communication volume) can be
decreased at the cost of increasing its latency cost (number of messages).
By varying a parameter to navigate the bandwidth\slash latency tradeoff, we can tune this algorithm for machines with different communication costs.
\end{abstract}

\section{Introduction}
\label{sec:introduction}

A common task in numerical linear algebra, especially when solving least\-/squares and eigenvalue problems, is 
\emph{QR\=/decomposing} a matrix into a unitary Q\=/factor times an upper trapezoidal R\=/factor.
We present a QR decomposition algorithm, \TDQR, 
whose bandwidth and latency costs demonstrate a tradeoff.

We model the cost of a parallel algorithm in terms of the number of arithmetic operations, the number of words moved between processors, and the number of messages in which these words are moved.
These three quantities, measured along critical paths in a parallel schedule, characterize the algorithm's arithmetic cost, bandwidth cost, and latency cost, respectively.
\begin{theorem}
\label{thm:3DUB}
An $m \times n$ matrix, $m \ge n$, can be QR\-/decomposed on $P$ processors with these asymptotic costs:
\begin{equation}
\label{eq:3DUB}
\begin{array}{c|c|c}
\text{\#\,operations} & \text{\#\,words} & \text{\#\,messages}
\\\hline
mn^2 / P & n^2 \slash \lt( nP/m\rt)^\delta & \lt( nP/m\rt)^\delta (\log P)^2
\end{array}
\end{equation}
where $\delta$ can be chosen from $[1/2,\,2/3]$, assuming
\begin{equation}
\label{eq:3DUB-hyp}
\begin{aligned}
P \mathbin{/} (\log P)^4 &= \Omega\big( m \mathbin{/} n \big)\text{,\quad and}
\\
P \cdot (\log P)^2 &= O\lt( m^{\frac{\delta}{1+\delta}} \cdot n^{\frac{1-\delta}{1+\delta}} \rt)\text.
\end{aligned}
\end{equation}
\end{theorem}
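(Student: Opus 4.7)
My plan is to construct \TDQR\ as a blocked Householder\-/type QR that processes the matrix in $n/b$ panels of width $b$, where the panel factorization uses a 1D variant \ODQR\ applied to a rectangular subgrid of processors, and the trailing update is executed via a three\=/dimensional matrix multiplication whose replication factor (the ``third\=/dimension'' parameter) is what the exponent $\delta$ ultimately controls. The two extremes $\delta=1/2$ and $\delta=2/3$ should recover, respectively, the classical 2D \CAQR\ layout and the fully 3D layout obtained by maximizing the memory\-/replication dimension; intermediate values correspond to intermediate grid aspect ratios. I would organize $P$ processors as a grid $p_1\times p_2\times p_3$ whose factors depend on $\delta$, $m$, $n$, and $b$, and choose $b$ so that the cost of panel factorizations and the cost of trailing updates are (asymptotically) balanced.

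With the layout fixed, the cost analysis proceeds additively over panels. The arithmetic cost per panel is $\Theta(mnb/P)$ and summed across $n/b$ panels yields $\Theta(mn^2/P)$, matching the target. For communication, the panel factorization contributes $O(b)$ words per call from \ODQR\ (up to polylog factors), while the 3D trailing update contributes words proportional to the perimeter of a processor's local block of the trailing matrix divided by the replication factor; summing the per\-/panel contributions and substituting the chosen $p_1,p_2,p_3,b$ should collapse to $n^2\,(nP/m)^{-\delta}$. The latency is bounded similarly: each panel contributes $O((\log P)^2)$ messages from the TSQR tree inside \ODQR\ plus the messages incurred by the 3D multiply, and multiplying by the number of panels gives $(nP/m)^\delta(\log P)^2$. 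I would express each of these costs symbolically as a function of the grid parameters, then verify that the chosen parametrization (one continuous family indexed by $\delta\in[1/2,2/3]$) minimizes their maximum.

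Finally I would check that the two hypotheses in \eqref{eq:3DUB-hyp} are exactly what is needed for this parametrization to be well\=/posed: the lower bound $P/(\log P)^4=\Omega(m/n)$ ensures that the replicated panel fits in the available per\-/processor memory and that $b\ge 1$ (so that there is ``room'' to widen panels and exploit the third dimension), while the upper bound $P\cdot(\log P)^2 = O\bigl(m^{\delta/(1+\delta)}n^{(1-\delta)/(1+\delta)}\bigr)$ is the condition that each of $p_1,p_2,p_3$ is at least one, i.e.\ that $P$ is not so large that the chosen grid cannot be realized. These inequalities should fall out of the same algebra that chooses $b$.

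The main obstacle I anticipate is not any individual cost bound but rather the bookkeeping needed to show that a single family of grids, indexed continuously by $\delta$, simultaneously interpolates between the known 2D and 3D regimes \emph{and} keeps the per\-/panel costs balanced for every $\delta$ in the range. In particular, because the trailing updates shrink as the algorithm progresses, care is required to show that the panel\-/local grid parameters can be held fixed (or adjusted at negligible cost) while the sum of the geometric\-/like cost series over $n/b$ panels still matches the stated bound; this is where the polylog factors and the hypothesis on $P/(\log P)^4$ will do the real work.
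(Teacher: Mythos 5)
Your proposal organizes \TDQR\ as a \emph{right\-/looking iterative} blocked Householder QR: $n/b$ panels of width $b$, each panel factored by \ODQR\ and each trailing update done by a 3D multiply. This is genuinely not the structure in the paper, which instantiates the Elmroth\-/Gustavson \emph{recursive halving} template \QREG\ (base\-/case threshold $b$, base\-/case subroutine \ODQR). The difference is not cosmetic, and I believe your iterative scheme cannot reach the $\delta=2/3$ bandwidth bound. In the iterative scheme every trailing update is a rank\-/$b$ multiply of shape roughly $m\times b$ by $b\times(n{-}kb)$; by \cref{lem:3DMM} its bandwidth is $\Theta\big((m\,b\,(n{-}kb)/P)^{2/3}\big)$, and summing over the $n/b$ panels gives on the order of $(n/b)\cdot(m b n/P)^{2/3}=n^{5/3}m^{2/3}b^{-1/3}P^{-2/3}$. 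Matching the target $n^2/(nP/m)^{2/3}=n^{4/3}m^{2/3}P^{-2/3}$ forces $b=\Theta(n)$, i.e.\ no panels at all. In the recursive scheme the largest multiply is the single top\-/level $\tfrac{n}{2}\times\tfrac{n}{2}\times m$ product, the level\-/$\ell$ contribution is $2^\ell\cdot(m(n/2^\ell)^2/P)^{2/3}=2^{-\ell/3}(mn^2/P)^{2/3}$, and the geometric series converges to $\Theta\big((mn^2/P)^{2/3}\big)$. That convergent series is precisely why Elmroth\-/Gustavson recursion (not iterative blocking) is the framework the paper specializes --- it makes the 3D\-/MM bandwidth attainable.

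Two further points. First, the parameter $\delta$ in the paper does not directly set the third grid dimension; it sets the recursive threshold $b=\Theta\big(n/(nP/m)^\delta\big)$, and each \TDMM\ call then chooses its own grid from the local dimensions via \cref{lem:3DMM}. Your picture of a single fixed $p_1\times p_2\times p_3$ grid does not match the paper, where every multiply (and every \AlltoAll\ around it) reshuffles to a fresh cubish grid. Second, your reading of \cref{eq:3DUB-hyp} is off: the model in \cref{sec:model} assumes \emph{unbounded} local memory, so the first hypothesis cannot be a memory\-/fitting condition. In the paper's proof, $P/(\log P)^4=\Omega(m/n)$ (together with the other condition) is what suppresses the extra bandwidth term $W$ coming from the per\-/level \AlltoAll\ data redistributions, while the upper bound $P(\log P)^2=O(m^{\delta/(1+\delta)}n^{(1-\delta)/(1+\delta)})$ ensures the choices $P=O(b^2)$ and $P^*=O(b^{*2})$ needed to invoke \cref{lem:3DMM} and \cref{lem:1DQR} respectively. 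You also omit the second block\-/size parameter $\epsilon$ (fixed to $1$ in the theorem), which controls the \TSQR\ threshold inside \ODQR\ and is essential for squeezing the extra $\log P$ factors out of \TSQR.
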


This arithmetic cost is optimal \cite{DGHL12}.
For the smallest $\delta = 1/2$, the latency cost is optimal, and
for the largest $\delta = 2/3$, the bandwidth cost is optimal 
\cite{BCDHKS14}.
However, these bandwidth and latency lower bounds are not attained simultaneously:
the bandwidth\-/latency product is $O(n^2 (\log P)^2)$.
We conjecture that this product must be $\Omega(n^2)$, meaning the tradeoff is inevitable.

Our main contribution is the presentation and analysis of \TDQR, which
extends Elmroth\-/Gustavson's recursive algorithm~\cite{EG00} to the distributed\-/memory setting and uses communication\-/efficient subroutines.
The inductive cases feature \emph{3D matrix multiplication} (\TDMM) \cite{ABGJP95}, which incurs a smaller bandwidth cost than conventional (2D) approaches.
The base cases feature a new variant of \emph{communication\-/avoiding QR} (\CAQR) \cite{DGHL12}.
\CAQR\ incurs a smaller latency cost than conventional (Householder) QR.
Our variant further improves the bandwidth cost.
We chose the name `\TDQR'\ to reflect this lineage.

For tall-and-skinny matrices whose aspect ratio is at least $P$,
it's best to directly invoke \TDQR's base-case subroutine, \ODQR.
\ODQR\ also demonstrates a bandwidth\slash latency tradeoff, albeit less drastic,
which we can navigate to derive the following bounds.
\begin{theorem}
\label{thm:1DUB}
An $m \times n$ matrix can be QR\-/decomposed on $P \ge m/n \ge 1$ processors with these asymptotic costs:
\begin{equation}
\label{eq:1DUB}
\begin{array}{c|c|c}
\text{\#\,operations} & \text{\#\,words} & \text{\#\,messages}
\\\hline
mn^2/P & n^2  & (\log P)^2
\end{array}
\end{equation}
assuming
$P\cdot(\log P)^2 = O(n^2)$.
\end{theorem}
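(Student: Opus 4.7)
The plan is to derive \cref{thm:1DUB} as an immediate corollary of the cost analysis of \ODQR\ at a specific point on its bandwidth/latency tradeoff curve. The hypothesis $P\ge m/n$ places the input in the tall-and-skinny regime for which \ODQR\ is designed, while the hypothesis $P(\log P)^2 = O(n^2)$ should be precisely the range in which \ODQR's tradeoff parameter can be pushed to the bandwidth-minimizing endpoint without breaking arithmetic load-balance. Since \ODQR's tradeoff is less flexible than \TDQR's, a single parameter setting should suffice, and the proof reduces to plugging that choice into the forthcoming cost lemma for \ODQR\ and verifying the three entries in (\ref{eq:1DUB}).

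Structurally, I would distribute the matrix as $P$ row panels of shape $(m/P)\times n$, well-defined since $m/P \le n$, and have each processor locally QR-decompose its panel without any communication, contributing $O(mn^2/P)$ flops and an $n\times n$ triangular R-factor. These are then combined into a single global R via a $\log P$-deep tournament; the novelty of \ODQR\ relative to a textbook \TSQR\ lies in this combining phase, which must achieve bandwidth $O(n^2)$ rather than the naive $O(n^2\log P)$. The natural idea is to spread each tournament level's communication across $\Theta(\log P)$ messages of size $O(n^2/\log P)$ via a bandwidth-efficient collective, so that summing $\log P$ levels gives $O(n^2)$ words and $O((\log P)^2)$ messages on the critical path.

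The main obstacle I anticipate is arithmetic accounting in the combining phase. A textbook tournament charges $O(n^3)$ flops per active processor per level, totaling $O(n^3\log P)$ along the critical path, and absorbing this into $O(mn^2/P)$ would require $m = \Omega(nP\log P)$, which is \emph{not} implied by the hypotheses. So the combining phase must itself load-balance its intermediate QRs across the processors that the tournament has idled---a non-trivial scheduling problem---and this is exactly where the hypothesis $P(\log P)^2 = O(n^2)$ will buy enough slack. Composing the three cost bounds and verifying that the collective schedule is free of pipelining stalls then completes the proof.
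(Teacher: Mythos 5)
Your high-level framing --- that the theorem is a corollary of a cost lemma for \ODQR\ at a specific endpoint of its bandwidth/latency tradeoff --- is correct, and the paper indeed obtains it by substituting $b = \Theta(n/\log P)$ (i.e., $\epsilon=1$ in \cref{eq:1DQR-b}) into \cref{eq:1DUB-alt}. But your structural sketch of \ODQR, and hence your account of where the bandwidth savings come from, is wrong in a way that is not fixable by more careful scheduling. You propose to keep a single $\log P$-deep \TSQR\ tournament on the full $m\times n$ matrix and to reduce its bandwidth by replacing each level's exchange of an $n\times n$ triangular factor with a bandwidth-efficient collective that splits the factor into $\Theta(\log P)$ smaller messages. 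That is precisely the optimization the paper explains is \emph{not} available: the per-level combining step is not an elementwise reduction but a local QR decomposition of two stacked $n\times n$ triangles, so chunking the operand and reducing chunks independently does not produce the R-factor of the combined problem. The paper flags this explicitly at the end of \cref{sec:CAQR}: ``\TSQR's \Reduce- and \Broadcast-like collectives preclude these optimizations.''

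What \ODQR\ actually does is change the decomposition, not the collective. It instantiates the Elmroth--Gustavson column recursion \QREG\ (\cref{alg:QREG}), splitting the $n$ columns in half down to a threshold $b$, so that \TSQR\ tournaments run only on $m\times b$ panels. Each narrow \TSQR\ costs $O(b^2\log P)$ words and $O(mb^2/P + b^3\log P)$ operations, and there are roughly $n/b$ of them, giving $O(nb\log P)$ words and an additive $O(nb^2\log P)$ operations in total. The interior nodes of the recursion tree are ordinary matrix multiplications whose reduces and broadcasts \emph{are} elementwise, so bidirectional exchange applies there and contributes only $O(n^2)$ words. At $b = \Theta(n/\log P)$ the base-case bandwidth is $O(n^2)$ and the additive arithmetic is $O(n^3/\log P)$, which is dominated by $mn^2/P$ precisely because \ODQR\ requires each processor to own at least $n$ rows, i.e.\ $P \le m/n$ --- this is the regime stated in \cref{sec:1DQR-an} and in the caption of \cref{tab:costcomp1D}, so your working assumption $m/P \le n$ has the inequality reversed, and under that reversal the additive arithmetic term would not in fact be dominated. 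In short, the arithmetic obstacle you correctly anticipate is resolved by shrinking the intermediate triangular QRs from $n\times n$ to $b\times b$, not by re-scheduling $n\times n$ QRs across idle processors, and the tournament's own messages are not re-routed at all.
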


The rest of this work is organized as follows.
We start by summarizing relevant mathematical background on computing QR decompositions (\cref{sec:QR}).
We then introduce our parallel machine model, formalizing how we quantify the costs of communication and computation (\cref{sec:model}).
Next we review the communication\-/efficient subroutines mentioned above, \TDMM\ (\cref{sec:3DMM}) and \CAQR\ (\cref{sec:CAQR}).
With this background in place, we present and analyze the new algorithms, \ODQR\ (\cref{sec:1DQR}) and \TDQR\ (\cref{sec:3DQR}), proving \cref{thm:3DUB,thm:1DUB}, in reverse order.
We conclude by discussing limitations and extensions and comparing with related work (\cref{sec:discussion}). 

\section{QR Decomposition}
\label{sec:QR}

In \cref{sec:QR}, we summarize the relevant background concerning computing QR decompositions.

After formalizing the problem in \cref{sec:QR:prelim}, we present a recursive template algorithm, called \rQR\ (\cref{alg:rQR} in \cref{sec:QR:recursive}), which includes many well\-/known algorithms as special cases.
We then specialize \rQR\ to utilize compact matrix representations (\cref{sec:QR:basis-kernel}) and a simpler recursive splitting strategy (\cref{sec:QR:QREG}).
The result of these specializations, called \QREG\ (\cref{alg:QREG}), serves as a template for our two new algorithms, \ODQR\ and \TDQR.

\subsection{QR Preliminaries}
\label{sec:QR:prelim}

A \emph{QR decomposition} of a matrix $\A$ is a matrix pair $(\Q,\R)$ such that 
$\A=\Q\R$, 
the \emph{Q\=/factor} $\Q$ is unitary, meaning $\Q^H\Q=\Q\Q^H=\M{I}$, and the \emph{R\=/factor} $\R$ is upper trapezoidal, meaning all entries below its main diagonal equal zero. 
To specialize for real\-/valued $\A$, simply substitute $(\cdot)^T$ for $(\cdot)^H$ and `orthogonal' for `unitary'. 

We will always assume that $\A$ has at least as many rows as columns.
(This implies that $\R$ has the same dimensions as $\A$ and is upper triangular.)
When $\A$ has more columns than rows, we can obtain a QR decomposition by splitting 
$\A = 
\begin{bmatrix}
\A_1 & \A_2 
\end{bmatrix}$
with square $\A_1$,
decomposing $\A_1 = \Q \R_1$, and 
computing
$\R = \begin{bmatrix} \R_1 & \Q^H \A_2 \end{bmatrix}$.

\subsection{Recursive QR Decomposition}
\label{sec:QR:recursive}

We consider QR decomposition algorithms based on \rQR\ (\cref{alg:rQR}), which split $\A$ vertically (\cref{line:rQR:split}), QR\=/decompose the left panel (\cref{line:rQR:left-rec}), update the right panel (\cref{line:rQR:update}), QR\=/decompose the lower part of the (updated) right panel (\cref{line:rQR:right-rec}), and then assemble a QR decomposition from the smaller ones (\cref{line:rQR:Q-assemble,line:rQR:R-assemble}).

\begin{algorithm}[htbp]
\caption{$\left(\Q,\,\R\right) = \rQR\left(\A\right)$}
\label{alg:rQR}
\begin{algorithmic}[1]

\If{\textsc{base\-/condition}}
	\label{line:rQR:base-condition}

	\State $\left(\Q,\,\R\right)=\textsc{base\=/QR}\left(\A\right)$.
	\label{line:rQR:base-case}
	
\Else

	\State \textsc{split} $\A = 
	\begin{bmatrix}
	\A_{11} & \A_{12} \\
	\A_{21} & \A_{22} 
	\end{bmatrix}$
	with $\A_{11}$ square.
	\label{line:rQR:split}

	\State $\left(\Q_\mathrm{L},\,\R_\mathrm{L}\right)=\rQR\left(\begin{bmatrix}\A_{11}\\\A_{21}\end{bmatrix}\right)$.
	\label{line:rQR:left-rec}
	
	\State $\begin{bmatrix}\M{B}_{12} \\ \M{B}_{22}\end{bmatrix} = \Q_\mathrm{L}^H\cdot\begin{bmatrix}\A_{12}\\\A_{22}\end{bmatrix}$.
	\label{line:rQR:update}
	
	\State $\left(\Q_\mathrm{R},\,\R_\mathrm{R}\right)=\rQR\left(\M{B}_{22}\right)$.
	\label{line:rQR:right-rec}	
		
	\State 
	$\Q = \Q_\mathrm{L} \cdot \begin{bmatrix} \M{I} & \M{0} \\ \M{0} & \Q_\mathrm{R} \end{bmatrix}$.
	\label{line:rQR:Q-assemble}
	
	\State
	$\R = \begin{bmatrix} \R_\mathrm{L} & \begin{bmatrix} \M{B}_{12} \\ \R_\mathrm{R} \end{bmatrix}\end{bmatrix}$.
	\label{line:rQR:R-assemble}

\EndIf

\end{algorithmic}
\end{algorithm}

We call \rQR\ a `template' because it leaves several details unspecified.
To instantiate this template and obtain an algorithm, we must pick
a base\-/case condition (\textsc{base-condition}, \cref{line:rQR:base-condition}), 
a base\-/case QR\=/decomposition subroutine (\textsc{base-QR}, \cref{line:rQR:base-case}), and
a splitting strategy (\textsc{Split}, \cref{line:rQR:split}).
Additionally, we must specify how the operations are scheduled and how the data are distributed.

\subsection{Compact Representations}
\label{sec:QR:basis-kernel}

In practice, a QR decomposition $(\Q,\R)$ of an $m\times n$ matrix ($m \ge n$) is typically not represented as a pair of explicit matrices.
In \cref{sec:QR:basis-kernel} we specialize \rQR\ to represent $\Q$ and $\R$ more compactly.

Since an R-factor is upper triangular, it is identifiable by just its superdiagonal entries.
Subsequently, the symbol $\R$ may either denote 
(1) the actual R-factor, an $m \times n$ upper\-/triangular matrix;
(2) the leading $n$ rows of the R-factor, an $n \times n$ upper\-/triangular matrix; or
(3) the upper triangle of the R-factor, a data structure of size $n(n+1)/2$.
When presenting algorithms, we will prefer convention~(2);
to obtain an $n \times n$ upper\-/triangular $\R$ from \rQR,
we agree that \textsc{base\=/QR} (\cref{line:rQR:base-case}) returns such an $\R$, and
we rewrite the R-factor assembly (\cref{line:rQR:R-assemble}) as
\[ \R = \begin{bmatrix} \R_\mathrm{L} & \M{B}_{12} \\ \M{0} & \R_\mathrm{R} \end{bmatrix}\text.\]

Any unitary matrix $\Q$ can be written as $\Q = \M{I}-\V\T\V^H$: the matrix pair $(\V,\T)$ is called a \emph{basis\-/kernel representation}~\cite{SB95} of $\Q$.
If $\Q$ is the Q-factor of a QR decomposition of an $m\times n$ matrix ($m \ge n$), then there exists such a representation where the \emph{basis} $\V$ is $m \times n$ and the \emph{kernel} $\T$ is $n\times n$.

Modifying \rQR\ to use basis\-/kernel representations, \cref{line:rQR:update} becomes
\begin{equation}
\label{eq:basis-kernel-update}
\begin{bmatrix}\M{B}_{12} \\ \M{B}_{22}\end{bmatrix} = 
\begin{bmatrix}\A_{12}\\\A_{22}\end{bmatrix} - \V_\mathrm{L}\T_\mathrm{L}^H\V_\mathrm{L}^H\begin{bmatrix}\A_{12} \\\A_{22}\end{bmatrix}\text,
\end{equation}
and \cref{line:rQR:Q-assemble} becomes 
\begin{equation}
\label{eq:basis-kernel-assemble} 
\begin{aligned}
\V&=\begin{bmatrix} \V_\mathrm{L} & \begin{bmatrix} \M{0} \\ \V_\mathrm{R} \end{bmatrix}\end{bmatrix}
\\
\T&= \begin{bmatrix} \T_\mathrm{L} & -\T_\mathrm{L}\V_\mathrm{L}^H\begin{bmatrix} \M{0} \\ \V_\mathrm{R} \end{bmatrix} \T_\mathrm{R} \\ \M{0} & \T_\mathrm{R} \end{bmatrix}\text,
\end{aligned}
\end{equation}
where $(\V_\mathrm{L},\T_\mathrm{L})$ and $(\V_\mathrm{R},\T_\mathrm{R})$ represent $\Q_\mathrm{L}$ and $\Q_\mathrm{R}$.

To simplify the presentation without affecting our asymptotic conclusions, we will not exploit the block\-/lower\-/trapezoidal and block\-/upper\-/triangular structures of the bases and kernels.
With this understanding, it minimizes arithmetic to evaluate the quadruple matrix product in \cref{eq:basis-kernel-update} from right to left, and the quadruple product in \cref{eq:basis-kernel-assemble} from inside\-/out (two possibilities).

When $m$ is close to $n$, a general basis\-/kernel representation may require more storage than the explicit ($m\times m$) Q-factor.
The QR decomposition algorithms in (Sca)LAPACK use a variant~\cite{P92} of \emph{compact WY representation}~\cite{SV89}, which we call \emph{Householder representation} in this work.
In Householder representation, $\V$ is unit lower trapezoidal and $\T$ is upper triangular.
These properties enable an in\-/place implementation, where $\V$'s strict lower trapezoid and $\R$'s upper triangle overwrite $\A$ and where $\T$ need not be stored, since in this case
\[ \T = \left(\mathrm{triu}(\V^H\V,{-}1)+\mathrm{diag}(\mathrm{diag}(\V^H\V))/2\right)^{-1} \text,\]
using the MATLAB operations `$\mathrm{triu}$' and `$\mathrm{diag}$'. 

Our algorithms will construct, store, and apply Q-factors in Householder representation.
This choice is motivated by our practical goal of integration into the ScaLAPACK library \cite{SCALAPACK}; 
from a theoretical standpoint, any basis\-/kernel representation (with $m \times n$ basis) would yield the same asymptotic costs. 

\subsection{Elmroth-Gustavson's Approach}
\label{sec:QR:QREG}

The recursive framework of \rQR\ is quite general.
We will obtain our desired algorithmic costs by following an approach of Elmroth\-/Gustavson~\cite{EG00} (implemented in LAPACK's \texttt{\_geqrt3}), in which we split $\A$ vertically (roughly) in half, until the number of columns drops below a given threshold $b \ge 1$.
The recursive calls define a binary tree whose $\lceil \log_2 (n/b) \rceil$ levels are complete except possibly the last.  
%
%
(We always suppose $b \le n$; when $b \ge n$, the tree has just one node.)
We call this specialized template \QREG\ (\cref{alg:QREG}); \QREG\ utilizes the compact representations as explained in \cref{sec:QR:basis-kernel}.

\begin{algorithm}
\caption{$\left(\V,\,\T,\,\R\right) = \QREG\left(\A,\,b\right)$}
\label{alg:QREG}
\begin{algorithmic}[1]

\If{$n \le b$}
	\label{line:QREG:base-condition}

	\State $\left(\V,\,\T,\,\R\right)=\textsc{base-QR}\left(\A\right)$.
	\label{line:QREG:base-case}
	
\Else

	\State \textsc{split} $\A = 
	\begin{bmatrix}
	\A_{11} & \A_{12} \\
	\A_{21} & \A_{22} 
	\end{bmatrix}$
	so $\A_{11}$ is $\lfloor n/2 \rfloor \times \lfloor n/2 \rfloor$.
	\label{line:QREG:split}

	\State $\left(\V_\mathrm{L},\,\T_\mathrm{L},\,\R_\mathrm{L}\right)=\QREG\left(\begin{bmatrix}\A_{11}\\\A_{21}\end{bmatrix},\,b\right)$.
	\label{line:QREG:left-rec}
	
	\State $\M{M}_1 = \V_\mathrm{L}^H \cdot\begin{bmatrix}\A_{12}\\\A_{22}\end{bmatrix}$.
	\label{line:QREG:update-1}
	
	\State $\M{M}_2 = \T_\mathrm{L}^H \cdot \M{M}_1$.
	\label{line:QREG:update-2}
	
	\State $\begin{bmatrix}\M{B}_{12} \\ \M{B}_{22}\end{bmatrix} = \begin{bmatrix}\A_{12}\\\A_{22}\end{bmatrix} - \V_\mathrm{L} \cdot \M{M}_2$. 
	\label{line:QREG:update-3}	
	
	\State $\left(\V_\mathrm{R},\,\T_\mathrm{R},\,\R_\mathrm{R}\right)=\QREG\left(\M{B}_{22},\, b\right)$.
	\label{line:QREG:right-rec}
		
	\State $\V = \begin{bmatrix} \V_\mathrm{L} & \begin{bmatrix} \M{0} \\ \V_\mathrm{R} \end{bmatrix}\end{bmatrix}$.	
	\label{line:QREG:V-assemble}
		
	\State $\M{M}_3 = \V_\mathrm{L}^H \cdot \begin{bmatrix} \M{0} \\ \V_\mathrm{R} \end{bmatrix}$.
	\label{line:QREG:T-assemble-1}	
	
	\State $\M{M}_4 = \M{M}_3 \cdot \T_\mathrm{R}$.
	\label{line:QREG:T-assemble-2}	
	
	\State $\T = \begin{bmatrix} \T_\mathrm{L} & -\T_\mathrm{L} \cdot \M{M}_4 \\ \M{0} & \T_\mathrm{R} \end{bmatrix}$.
	\label{line:QREG:T-assemble-3}	
	
	\State
	$\R = \begin{bmatrix} \R_\mathrm{L} & \M{B}_{12} \\ \M{0} & \R_\mathrm{R} \end{bmatrix}$.
	\label{line:QREG:R-assemble}

\EndIf

\end{algorithmic}
\end{algorithm}

We mention that \cite{EG00} actually proposes a hybrid of the stated approach and an iterative approach, switching between the two for a constant-factor improvement in the arithmetic cost.
(It still fits in the \rQR\ framework.) 
While our algorithms can also benefit from this optimization, we omit further discussion in the present work since it does not affect our asymptotic conclusions.

\section{Computation Model}
\label{sec:model}

We model a parallel machine as a set of $P$ interconnected processors, each with unbounded local memory.
Processors operate on local data and communicate with other processors by sending and receiving messages.
A processor can perform at most one task (operation\slash send\slash receive) at a time.
A (data) word means a complex number; operations are the usual field actions, plus complex conjugation and real square roots.
Messages are point\-/to\-/point and asynchronous.
Each operation takes time $\gamma$, while sending or receiving a message of $w$ words takes time $\alpha + w \beta$, $\alpha$ being the latency and $\beta$ the inverse of the bandwidth.

We model an execution as a DAG whose vertices are tasks and whose edges define $P$ paths, one for each processor's task sequence, plus an inter\-/path edge for each send\slash receive pair.   
Weighting vertices by their tasks' durations, we define runtime as the maximum weight of any path.
Therefore, if every path includes at most $F$ operations and at most $S$ messages, containing at most $W$ words in total, the runtime is bounded,
\[ \text{runtime} \le \gamma \cdot F + \beta \cdot W + \alpha \cdot S \text.\]

When multiple processors send\slash receive messages simultaneously, it can be more efficient to split the messages into more messages of smaller sizes, to coalesce them into fewer, larger messages, or to route them through intermediate processors.
These lower-level implementation details are often irrelevant to our asymptotic analyses, motivating us to express our algorithms' communication patterns abstractly, in terms of collectives.

In the rest of \cref{sec:model} we define the eight different collectives appearing in this work,
collecting in \cref{tab:collectives} their costs.

\begin{table*}\centering
\begin{tabular}{c|c|c|c}
     & \emph{\#\,operations}         & \emph{\#\,words}                           & \emph{\#\,messages} \\\hline\hline
\Scatter       & $0$                   & $(P-1)B$                           & $\log P$    \\\hline
\Gather        & $0$                   & $(P-1)B$                           & $\log P$    \\\hline
\Broadcast     & $0$                   & $\min(B\log P,\,B+P)$              & $\log P$    \\\hline
\Reduce        & $\min(B\log P,\,B+P)$ & $\min(B\log P,\,B+P)$              & $\log P$    \\\hline
\AllGather     & $0$                   & $(P-1)B$                           & $\log P$    \\\hline
\AllReduce     & $\min(B\log P,\,B+P)$ & $\min(B\log P,\,B+P)$              & $\log P$    \\\hline
\AlltoAll      & $0$                   & $\min(BP\log P,\,(B_*+P^2)\log P)$ & $\log P$    \\\hline
\ReduceScatter & $(P-1)B$              & $(P-1)B$                           & $\log P$
\end{tabular}
\caption{\label{tab:collectives}%
\textnormal{
Asymptotic costs of the collectives defined in \cref{sec:model}. 
$P$ is the number of processors involved,
$B = \max_{p,q} B_{pq}$ is the largest block\-/size, and
$B_* = \max\big(\max_q \sum_p B_{pq},\, \max_p \sum_q B_{pq}\big)$ is the maximum number of words any processor holds before/after.
Any $P$ can be replaced by $|\{p,q : B_{pq} > 0\}|$ for a possibly smaller (valid) bound.
}
}%
\end{table*}
 
A general scenario, called an \AlltoAll, is when every processor~$p$ initially owns a block of data, containing $B_{pq}$ words, destined for every processor~$q$, including itself.
All other collectives we use can be interpreted as special cases of an \AlltoAll.
Four of these distinguish a `root' processor $r$:
\Scatter, where only $r$'s outgoing blocks are nonempty;
\Gather, where only $r$'s incoming blocks are nonempty;
\Broadcast, a \Scatter\ where $r$'s outgoing blocks are identical; and
\Reduce, a \Gather\ where $r$'s incoming blocks have same size and are added entrywise.
Four others, including \AlltoAll, can be constructed from the first four:
\AllGather, $P$ \Gather{}s with same outgoing blocks but different roots;
\AllReduce, $P$ \Reduce{}s with same outgoing blocks but different roots;
\AlltoAll, $P$ \Gather{}s with different outgoing blocks and different roots; and
\ReduceScatter, $P$ \Reduce{}s with different outgoing blocks and different roots.
(Since the three `\Reduce' collectives perform arithmetic, they are technically not \AlltoAll{s}.)

\begin{lemma}
\label{lem:collectives}
There exist algorithms for the eight collectives satisfying the upper bounds in \cref{tab:collectives}.
\end{lemma}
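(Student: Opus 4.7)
My plan is to exhibit, for each of the eight collectives, an explicit algorithm achieving the stated upper bound, ordering the proofs so later collectives may invoke earlier ones. I would begin with \Scatter\ and \Gather\ via a binomial tree of depth $\lceil\log_2 P\rceil$: in \Scatter, the root partitions its outgoing blocks by destination-subgroup and forwards one half to a partner at distance $P/2$; both processors then recursively scatter within their subgroups. The critical path traverses $\log P$ levels, and its word-volume is geometric, summing to at most $(P{-}1)B$. \Gather\ is the reverse.

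Next, \Broadcast\ and \Reduce\ each need two algorithms to cover both cases of the $\min(B\log P,\,B+P)$ bound. A binomial tree forwards at most $B$ words along each of $\log P$ edges on the critical path, giving the $B\log P$ term (and, for \Reduce, $B\log P$ accumulated additions). For the $B+P$ term, I would compose \Scatter\ with \AllGather\ (for \Broadcast) or \ReduceScatter\ with \Gather\ (for \Reduce); each phase has $\log P$ messages and $O(B+P)$ words. \AllGather\ is itself realized by recursive doubling: in round $i\in\{1,\dots,\log P\}$, processor $p$ exchanges its accumulated buffer with $p\oplus 2^{i-1}$, doubling it, and the per-round sizes $B,2B,\dots,(P/2)B$ sum to $(P{-}1)B$. \ReduceScatter\ is dual, with the same word bound plus $(P{-}1)B$ additions on the critical path; \AllReduce\ is then \ReduceScatter\ composed with \AllGather, with a binomial-tree variant covering the $B\log P$ regime.

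The hardest case is \AlltoAll. The $BP\log P$ term follows from a Bruck-style hypercube exchange in $\log P$ rounds, where each round moves up to $BP/2$ words on the critical path when sub-blocks are padded to the uniform size $B$. The $(B_*+P^2)\log P$ term uses the same schedule without padding: in each round, a processor sends only the nonempty sub-blocks destined for the complementary subgroup, so its per-round send-plus-receive volume is $O(B_*+P^2)$, with the $P^2$ summand covering the metadata needed to identify up to $P$ destinations per processor per round. Taking the minimum of the two yields the stated bound.

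The main obstacle is the \AlltoAll\ analysis, specifically the $(B_*+P^2)\log P$ case: one must verify that the nonuniform sub-block sizes do not spoil the hypercube schedule and that the per-round metadata overhead really is $O(P^2)$ on the critical path, not larger. Everything else reduces to standard recurrences on binomial trees or hypercubes; the remaining bookkeeping lies in the ceilings/floors of the $\log P$ factors and in verifying that the `$B+P$' compositions respect the stated operation counts for the \Reduce\ variants.
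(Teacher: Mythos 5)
Your treatment of the seven collectives other than \AlltoAll\ closely matches the paper's proof: binomial trees for \Scatter, \Gather, and the $B\log P$ regime of \Broadcast, \Reduce, and \AllReduce; bidirectional\-/exchange (recursive halving/doubling) for \AllGather\ and \ReduceScatter; and the compositions \Scatter{}$+$\AllGather\ for \Broadcast, \ReduceScatter{}$+$\Gather\ for \Reduce, and \ReduceScatter{}$+$\AllGather\ for \AllReduce\ in the $B+P$ regime. The $BP\log P$ branch of \AlltoAll\ via the radix\-/2 index (Bruck) algorithm also agrees with the paper.

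The gap is in your argument for the $(B_*+P^2)\log P$ branch of \AlltoAll. You claim the unpadded index schedule has per\-/round send\-/plus\-/receive volume $O(B_*+P^2)$, attributing $P^2$ to metadata. This is false: without load balancing the index algorithm suffers intermediate congestion that can greatly exceed $B_*$. Concretely, take $P$ a power of $2$, set $i\approx\tfrac12\log_2 P$, fix a processor $t$, and let $B_{sd}=B$ exactly when $s\in\{t-j:0\le j<2^i\}$ and $d\in\{t+2^i+k\,2^{i+1}:0\le k<P/2^{i+1}\}$, and $B_{sd}=0$ otherwise. Then $B_*=\Theta(\sqrt{P}\,B)$, yet by construction every one of these $P/2$ blocks is routed to processor $t$ after rounds $0,\ldots,i-1$, and all have bit $i$ set, so in round $i$ processor $t$ sends a single message of size $\Theta(PB)=\Theta(\sqrt{P}\cdot B_*)$ words, which exceeds $(B_*+P^2)\log P$ once $B$ is large. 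The paper instead uses the two\-/phase load\-/balancing scheme of Helman, Bader, and J\'aJ\'a: each processor first deals its $B_{pq}$ words for each $q$ cyclically among all $P$ intermediate targets (producing blocks of size $O(B_*/P+P)$), then runs the index algorithm twice; the $P^2\log P$ term is the ceiling overhead from this dealing, not message metadata. You would need this preprocessing, or an equivalent balancing argument, to establish the second branch.
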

\begin{proof}
For all but \AlltoAll\ we use a \emph{binomial\-/tree} and possibly a \emph{bidirectional\-/exchange algorithm}: see, e.g., \cite{TRG05,CHPV07}.
In particular, for (\textsc{all\=/})\allowbreak\Gather\ and (\textsc{reduce\=/})\allowbreak\Scatter\ we use binary tree algorithms,
and for \Broadcast\ and (\textsc{all\=/})\allowbreak\Reduce\ we use whichever of the two minimizes all three costs, asymptotically.
For \AlltoAll\ we use the \emph{(radix-2) index algorithm} \cite{BHKUW97}, 
possibly performed twice using the load\-/balancing approach of \cite{HBJ96}. 
(For a more detailed proof, see \cref{sec:collectives}.)
\end{proof}

Note that when the number of processors is not too large w.r.t.\ the block size, 
the bidirectional\-/exchange algorithm for \Broadcast, 
built from \Scatter\allowbreak+\AllGather,
is asymptotically cheaper than the corresponding binary tree algorithm in terms of bandwidth.
Similarly, the bidirectional\-/exchange algorithm for (\textsc{all\=/})\allowbreak\Reduce),
built from \ReduceScatter\allowbreak+(\textsc{all\=/})\allowbreak\Gather,
improves both arithmetic and bandwidth.

\section{3D Matrix Multiplication}
\label{sec:3DMM}

The key to reducing \TDQR's bandwidth cost below that of previous approaches is \emph{3D matrix multiplication} (\TDMM) \cite{ABGJP95}.
Here, \emph{3D} refers to the parallelization of the operations and distribution of data over a three-dimensional (logical) processor grid.
%
%
\ODQR\ will also exploit two special cases, \ODMM, performed on a one-dimensional grid, and \MM, performed locally on one processor.

For concreteness, consider multiplying an $I \times K$ matrix $\A$ with a $K \times J$ matrix $\B$ to obtain an $I \times J$ matrix $\MC$, via the usual (entrywise) formula: 
\begin{equation}
\label{eq:MMdef}
\text{for all $(i,j) \in [I]\times[J]$,}\quad
\MC_{ij} = \sum_{k \in [K]} \A_{ik} \B_{kj} \text.
\end{equation}
We identify each of the $IJK$ (scalar) multiplications with a point $(i,j,k)$ in three\-/dimensional Euclidean space, so the set of multiplications defines a discrete $I\times J \times K$ brick. 
We point the reader to \cite{BDKS16,SVP16} for further discussions of this geometric terminology.

\begin{lemma}
\label{lem:0DMM}
Suppose input matrices $\A$ and $\B$ are initially owned by processor $p$, and 
output matrix $\MC$ is to be finally owned also by processor $p$.
$\MC = \A \cdot \B$ can be computed with runtime
\begin{equation}
\label{eq:0DMM}
\gamma \cdot O\lt( I J K \rt) \text.
\end{equation}
\end{lemma}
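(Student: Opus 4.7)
The plan is to observe that this lemma concerns purely local computation on a single processor, so there is no communication and the runtime reduces entirely to the cost of the arithmetic operations performed by processor~$p$. In particular, the bandwidth and latency terms $\beta \cdot W$ and $\alpha \cdot S$ in the runtime bound from \cref{sec:model} vanish, and I only need to exhibit an execution in which processor~$p$ performs $O(IJK)$ operations.

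First, I would simply evaluate the definition \cref{eq:MMdef} directly: processor~$p$ computes each output entry $\MC_{ij}$ as $\sum_{k \in [K]} \A_{ik}\B_{kj}$. There are $IJ$ output entries, each requiring $K$ multiplications and $K-1$ additions, for a total of $IJ(2K-1) = O(IJK)$ field operations. By the machine model of \cref{sec:model}, each such operation takes time $\gamma$, and because all operations lie on a single processor's task sequence the critical-path operation count equals the total operation count.

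Second, I would note that no send or receive tasks are needed, since the inputs $\A$ and $\B$ already reside on processor~$p$ and the output $\MC$ is required only on~$p$. Hence $W = 0$ and $S = 0$, and plugging into the runtime bound stated in \cref{sec:model} yields runtime at most $\gamma \cdot O(IJK)$, matching \cref{eq:0DMM}.

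There is no real obstacle here; the lemma is essentially a bookkeeping statement that fixes notation and serves as the base case for the distributed multiplication lemmas (\ODMM\ and \TDMM) that follow in \cref{sec:3DMM}. The only subtlety worth flagging in the write-up is that we are content with the classical $O(IJK)$ algorithm and do not invoke any fast matrix multiplication scheme, since the subsequent analyses are calibrated to this exponent.
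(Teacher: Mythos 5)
Your proposal is correct and matches the paper's own one-line proof: directly evaluating the sums-of-products in \cref{eq:MMdef} on processor $p$ costs $IJK$ multiplications and $IJ(K-1)$ additions, with no communication needed. The extra bookkeeping about the runtime bound from \cref{sec:model} is a reasonable elaboration but not a different argument.
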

\begin{proof}
Directly evaluating the sums\-/of\-/products in \cref{eq:MMdef} on processor $p$ involves $IJK$ multiplications and $IJ(K{-}1)$ additions; no communication is necessary.
\end{proof}

\begin{lemma}
\label{lem:1DMM}
Suppose $I$, $J$, $K$, and $P$ satisfy
\[ 
P = O\lt( \frac{IJK}{\max(I,J,K)} \rt)
\;\;\text{and}\;\; 
P = O\lt( \max(I,J,K) \rt)
\text.
\]
If $K = \max(I,J,K)$, suppose
that matrices $\A^T$ and $\B$ are initially distributed in matching row\-/wise layouts where each processor owns $O(K/P)$ rows, and 
that matrix $\MC$ is to be finally owned by a single processor $r$.
Alternatively, 
if $I = \max(I,J,K)$, suppose
that matrices $\A$ and $\MC$ are initially/finally distributed in matching row\-/wise layouts where each processor owns $O(I/P)$ rows, and
that matrix $\B$ is initially owned by a single processor $r$.
$\MC = \A \cdot \B$ can be computed with runtime
\begin{equation}
\label{eq:1DMM}
\gamma\cdot O\lt( \frac{IJK}{P} \rt) + \beta \cdot O\lt( \frac{IJK}{\max(I,J,K)} \rt) + \alpha \cdot O\lt(\log P\rt)\text, 
\end{equation}
\end{lemma}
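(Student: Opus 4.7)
The plan is to treat each of the two stated data layouts by performing a single collective (either a \Reduce\ or a \Broadcast) sandwiched around local matrix multiplications, appealing to \cref{lem:0DMM,lem:collectives}. The common thread is that the two hypotheses on $P$ together guarantee the communicated block has size $\Omega(P)$, which collapses the $\min(B\log P,\,B+P)$ entries in \cref{tab:collectives} to $O(B)$; this is what lets the bandwidth match the promised $O(IJK/\max(I,J,K))$.

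\textbf{Case $K = \max(I,J,K)$.} I would have each processor $p$ form an $I \times J$ partial sum $\A_p \B_p$, where $\A_p$ consists of $p$'s $O(K/P)$ columns of $\A$ and $\B_p$ of the matching rows of $\B$. By \cref{lem:0DMM}, this costs $O(IJK/P)$ operations per processor. I would then \Reduce\ these partial sums onto the root $r$ with block size $B = IJ$. Since $P = O(IJK/K) = O(IJ)$, the reduction contributes $O(IJ)$ words and $O(IJ)$ arithmetic and $O(\log P)$ messages; the arithmetic overhead is absorbed into the local multiplication cost because $P = O(K)$ implies $IJ = O(IJK/P)$. Summing yields exactly \cref{eq:1DMM}.

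\textbf{Case $I = \max(I,J,K)$.} I would first \Broadcast\ $\B$ from the root with block size $B = JK$, then have each processor locally multiply its $O(I/P) \times K$ row-strip of $\A$ by $\B$, filling its row-strip of $\MC$ in $O(IJK/P)$ operations via \cref{lem:0DMM}. The hypothesis $P = O(IJK/I) = O(JK)$ again collapses the broadcast cost to $O(JK)$ words and $O(\log P)$ messages. Here the broadcast contributes no arithmetic, so the local multiplication dominates.

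\textbf{Main obstacle.} The only real subtlety is bookkeeping: tracking that the ``$P = O(IJK/\max)$'' bound is what turns each collective's bandwidth into $O(B)$, while the ``$P = O(\max)$'' bound is what guarantees the local flop count $O(IJK/P)$ dominates any arithmetic performed inside the \Reduce. Beyond these two observations, the argument is a direct application of \cref{lem:0DMM,lem:collectives}; no new algorithmic ideas are required.
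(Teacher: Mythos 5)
Your proposal is correct and follows essentially the same route as the paper's proof: in the first case a local partial-product followed by a single \Reduce\ with block size $IJ$, in the second a single \Broadcast\ of $\B$ followed by local multiplication, with the hypothesis $P = O(IJK/\max)$ collapsing the collective's $\min(B\log P,\,B+P)$ bound to $O(B)$ (i.e., enabling the bidirectional-exchange variant) and $P = O(\max)$ absorbing the \Reduce's arithmetic into the local flop count. The paper phrases this as a degenerate ($1$D) instance of the generic 3D algorithm (\cref{lem:3DMM-gen}), but the resulting communication pattern and cost accounting are identical to yours.
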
 
\begin{proof}
In the first case, each processor performs a local \MM\ and then all processors \Reduce\ to processor $r$.
In the second case, processor $r$ \Broadcast{s} $\B$ to all processors and then each processor performs a local \MM.
The hypotheses guarantee that $P$ is not too large for these collectives can leverage the bidirectional\-/exchange algorithms.
(For a more detailed proof, see \cref{sec:MM}.) 
\end{proof}
The bound of \cref{eq:1DMM} also holds in a third case, when $J = \max(I,J,K)$ and the distributions are symmetric to the second case, but we will not need this result.

\begin{lemma}
\label{lem:3DMM}
Suppose $I$, $J$, $K$, and $P$ satisfy
\[ c \cdot \frac{IJK}{\min(I,J,K)^3} \le P \le IJK\text. \]
There exists a data distribution of $\A$, $\B$, and $\MC$ such that each processor initially owns at most $O((I+J)K/P)$ entries of $\A$ and $\B$, and finally at most $O(IJ/P)$ entries of $\MC$, where
$\MC = \A \cdot \B$ can be computed with runtime
\begin{equation}
\label{eq:3DMM}
\gamma\cdot O\lt( \frac{IJK}{P} \rt) + \beta \cdot O\lt( \lt(\frac{IJK}{P}\rt)^{2/3} \rt) + \alpha \cdot O\lt(\log P\rt)\text.
\end{equation}
\end{lemma}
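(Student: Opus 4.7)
The plan is to implement the classical 3D matrix multiplication algorithm of~\cite{ABGJP95} on a logical three-dimensional processor grid, using \AllGather\ and \ReduceScatter\ collectives to achieve the stated bandwidth without logarithmic factors.

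First, I arrange the processors as a $P_I \times P_J \times P_K$ grid with $P_I P_J P_K = P$ and $P_I : P_J : P_K \approx I : J : K$, so that $P_I \approx I/s$, $P_J \approx J/s$, $P_K \approx K/s$, where $s = (IJK/P)^{1/3}$. The lower hypothesis $P \ge c \cdot IJK/\min(I,J,K)^3$ ensures $s \le \min(I,J,K)$, so each $P_X \ge 1$, while $P \le IJK$ gives $s \ge 1$. Non-integrality I would handle by the usual rounding and zero-padding of $\A$, $\B$, $\MC$ to multiples of the chosen $P_X$, inflating $s$ only by a constant factor.

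Next, I specify the initial data distribution: each of $\A$ and $\B$ is spread evenly across all $P$ processors, with each processor owning $O(IK/P)$ entries of $\A$ and $O(JK/P)$ entries of $\B$, for a combined $O((I+J)K/P)$, matching the claim. Concretely, the logical $(I/P_I)\times(K/P_K) \approx s \times s$ block of $\A$ associated with face-position $(i,k)$ is split equally among the $P_J$ processors along its $J$-fiber, so each holds $s^2/P_J$ entries; I apply the analogous scheme to $\B$ (along $I$-fibers) and to the target distribution of $\MC$ (along $K$-fibers, yielding $s^2/P_K = O(IJ/P)$ entries per processor finally).

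The algorithm then proceeds in four stages. Stage~1: within each $P_J$-fiber, \AllGather\ the $s^2$ entries comprising the block $\A_{ik}$, so every processor $(i,j,k)$ holds the full $\A_{ik}$; by \cref{lem:collectives} this costs $O(s^2)$ words in $O(\log P_J)$ messages. Stage~2: symmetrically \AllGather\ $\B$ along $P_I$-fibers. Stage~3: each processor invokes \cref{lem:0DMM} to compute the partial product $\A_{ik}\B_{kj}$ locally in $O(s^3) = O(IJK/P)$ operations. Stage~4: \ReduceScatter\ the partial $\MC$-blocks of size $s^2$ along each $K$-fiber of length $P_K$, producing the stipulated final distribution; by \cref{lem:collectives} this costs $O(s^2)$ words and operations in $O(\log P_K)$ messages. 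Summing the four stages yields the runtime bound of \cref{eq:3DMM}.

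The main obstacle is picking $P_I, P_J, P_K$ as positive integers satisfying the product constraint while keeping the block sizes balanced across the range of allowed $(I,J,K,P)$; I would address this by the standard padding argument. A secondary point to verify is that the \AllGather\ and \ReduceScatter\ bandwidth bounds from \cref{tab:collectives}, proportional to $(P_X-1) B$ with $B = s^2/P_X$, collapse to $O(s^2)$ independently of the fiber length $P_X$---which is precisely why these collectives are preferred here over \Broadcast\ and \Reduce.
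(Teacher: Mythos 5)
Your proposal is essentially the paper's own proof: a $Q\times R\times S$ grid with side lengths proportional to $I,J,K$ scaled by $\rho=(IJK/P)^{1/3}$, balanced entrywise distributions of $\A$, $\B$, $\MC$ across the orthogonal fiber in each case, two rounds of \AllGather\ along the $J$- and $I$-directions, local \MM, and a \ReduceScatter\ along the $K$-direction, with the cost coming out to $(P_X-1)\cdot s^2/P_X = O(s^2)$ per collective. The one place your fix differs and is insufficient is the "main obstacle'' you flag: zero-padding the matrices handles non-divisibility of $I,J,K$ by the grid sides, but it does not address the fact that $P$ itself need not factor as $P_I P_J P_K$ with the desired aspect ratio (e.g., $P$ prime). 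The paper instead sets $Q=\lfloor I/\rho\rfloor$, $R=\lfloor J/\rho\rfloor$, $S=\lfloor K/\rho\rfloor$, which yields $QRS\le P$ with $QRS=\Omega(P)$, and simply leaves the remaining $T=P-QRS$ processors idle; this costs only a constant factor and removes the need for any exact factorization of $P$.
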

\begin{proof}
Pick 
$Q = \lfloor I / \rho \rfloor$, 
$R = \lfloor J / \rho \rfloor$, and
$S = \lfloor K / \rho \rfloor$, 
where
$\rho = (IJK/P)^{1/3}$.
%
%
Under the hypotheses, $Q,R,S$ are positive integers with $QRS \le P$, thus define a valid $Q \times R \times S$ processor grid. 
Moreover, $QRS = \Omega(P)$.
Pick partitions $\{\II_q\}_q$, $\{\JJ_r\}_r$, and $\{\KK_s\}_s$ of $[I]$, $[J]$, and $[K]$ which are \emph{balanced}, meaning their parts differ in size by at most one.
Pick a partition $\{\cA_{q,r,s}\}_{q,r,s}$ of $[I]\times[K]$ to be a union of balanced $R$-way partitions of the sets $\II_q \times \KK_s$ ($(q,s) \in [Q]\times[S]$), and similarly for partitions $\{\cB_{q,r,s}\}_{q,r,s}$ and $\{\cC_{q,r,s}\}_{q,r,s}$ of $[K]\times[J]$ and $[I]\times[J]$.
Distribute entries $\cA_{q,r,s}$ of $\A$ to each grid processor $(q,r,s)$, and similarly for $\B$ and $\MC$:
this distribution satisfies the balance constraint in the theorem statement.
The algorithm proceeds with \AllGather{s} of blocks of $\A$ and $\B$ along processor grid fibers in the $Q$- and $R$-directions. 
then local \MM{s}, 
then finally \ReduceScatter{s} of blocks of $\MC$ along processor grid fibers in the $S$-direction.
(For a more detailed proof, see \cref{sec:MM}.)
\end{proof}

We denote by \MM, \ODMM, or \TDMM\ an algorithm that satisfies \cref{lem:0DMM}, \cref{lem:1DMM}, or \cref{lem:3DMM}, resp.

\section{Communication-Avoiding QR}
\label{sec:CAQR}

Our new algorithms \ODQR\ and \TDQR\ are closely related to 
\emph{communication\-/avoiding QR (\CAQR)} 
and 
\emph{tall\-/skinny QR (\TSQR)} 
\cite{DGHL12}:
we explore this relationship in \cref{sec:discussion}.
For now we remark that 
\TDQR\ specializes \QREG\ to use \ODQR\ as a base-case,
and that
\ODQR\ specializes \QREG\ to use \TSQR\ (the variant in \cite{BDGJKN15}) as a base-case.

On input, the $m \times n$ matrix $\A$ is partitioned across the $P$ processors so that each processor~$p$ owns $m_p \ge n$ rows, not necessarily contiguous.
Thus we require $\A$ be sufficiently tall and skinny: $m/n \ge P$.
A single processor~$r$, which owns $\A$'s $n$ leading rows, is designated as the \emph{root processor}.

On output, the Q-factor is stored in Householder representation $(\V,\,\T)$, where $\V$ has the same distribution as $\A$.
Both $\T$ and the R-factor are returned only on the root processor.

\begin{lemma}
\label{lem:TSQR}
\textnormal{\TSQR}'s runtime is
\begin{multline*} 
\gamma\cdot O\lt( \max_p m_p n^2 + n^3 \log P\rt)
\\ + \beta\cdot O\lt( n^2 \log P\rt)
+ \alpha\cdot O\lt( \log P \rt)
\end{multline*}
\end{lemma}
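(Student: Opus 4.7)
The plan is to analyze \TSQR\ as a two-phase algorithm: (i) a parallel phase of independent local QR decompositions on each processor's block of $\A$, followed by (ii) a binary-tree reduction of $n \times n$ R-factors, at each level of which a pair of surviving processors exchanges an R-factor and QR-decomposes the resulting $2n \times n$ stacked matrix. Each phase contributes separately to the three cost terms, and the bounds will follow by summing them along the critical path.

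For Phase~(i), I would argue that each processor~$p$ applies sequential Householder QR to its local $m_p \times n$ submatrix in $O(m_p n^2)$ operations, producing a local basis-kernel pair and a leading $n \times n$ R-factor; no communication is required, so this phase contributes $\gamma \cdot O(\max_p m_p n^2)$ along the critical path. For Phase~(ii), I would observe that the reduction tree has $\lceil \log_2 P \rceil$ levels; at each level the surviving processors pair up so that one sends its $n \times n$ triangular R-factor to its partner (an $O(n^2)$-word, single-message communication), which then performs a QR decomposition of the $2n \times n$ stacked matrix at $O(n^3)$ cost, keeping the new R-factor and the local basis-kernel pieces needed to reconstruct the Householder representation of \cite{BDGJKN15}. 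Summing over $\log P$ levels yields $\gamma \cdot O(n^3 \log P)$ for operations on the critical path, $\beta \cdot O(n^2 \log P)$ for words, and $\alpha \cdot O(\log P)$ for messages.

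Combining the two phases produces exactly the stated runtime. The hardest step is verifying that the Householder-representation variant of \cite{BDGJKN15}, which constructs the $\T$-kernels at the tree nodes in addition to the $\V$-bases, does not inflate any of the three asymptotic terms: the per-level $\T$-construction involves only $O(n^3)$ arithmetic on $n \times n$ triangular blocks already resident on the surviving processor, and the per-level extra data stored on the critical path remains $O(n^2)$, so the bound absorbs this overhead. A secondary subtlety is the tiebreaking in the tree construction when $P$ is not a power of two, but one can pad the tree to the next power of two at no asymptotic cost, since a $\log P$ factor already appears in both the $\gamma$ and $\beta$ terms.
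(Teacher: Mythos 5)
Your proposal accounts for only half of \TSQR. You correctly describe the upsweep: local $O(m_p n^2)$ Householder QR on each processor's block followed by a binomial-tree reduction of $n \times n$ R-factors, giving $\gamma \cdot O(\max_p m_p n^2 + n^3 \log P) + \beta \cdot O(n^2 \log P) + \alpha \cdot O(\log P)$. But the lemma is about the \TSQR\ variant of \cite{BDGJKN15} that must \emph{return} the Q-factor in a monolithic Householder representation $(\V,\T)$, with $\V$ unit lower trapezoidal and distributed like $\A$. The upsweep produces the R-factor and a tree of local basis-kernel pairs scattered across the processors --- it does not produce $(\V,\T)$. Recovering $(\V,\T)$ requires the downsweep of \cite{BDGJKN15}: a second, \Broadcast-like tree traversal with $n\times n$ blocks (another $\gamma \cdot O(n^3\log P) + \beta \cdot O(n^2\log P) + \alpha \cdot O(\log P)$ on the critical path), a local computation of $\M{W}_p$ on every processor costing $\gamma \cdot O(m_p n^2)$, a sign-modified LU of an $n\times n$ block on the root costing $\gamma \cdot O(n^3)$, and a final \Broadcast\ of $\M{U}$ followed by each processor forming $\V_p = \M{W}_p \M{U}^{-1}$.

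Your claim that ``the per-level $\T$-construction involves only $O(n^3)$ arithmetic on $n\times n$ triangular blocks already resident on the surviving processor'' mischaracterizes the reconstruction. It is not an in-place per-node operation done during the upsweep; it involves a second round of communication going back down the tree, plus $O(m_p n^2)$ extra flops per processor. The bound in the lemma happens to survive because the downsweep's cost profile \emph{asymptotically matches} the upsweep's (this is exactly what the paper's proof observes when it says the downsweep ``matches the current total runtime''), but your argument neither identifies the downsweep as a phase nor establishes that its cost is absorbed. As written, the proof only bounds an algorithm that stops after computing the R-factor; you need to explicitly enumerate the downsweep's communication and arithmetic and argue they stay within the stated asymptotic envelope.
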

\begin{proof}
It is crucial to use the \TSQR\ variant in \cite{BDGJKN15}.
(See also \cref{sec:TSQR} for a proof.)
\end{proof}

Recall from \cref{sec:model} that when the block-size is sufficiently large, \Reduce\ and \Broadcast\ can be performed more efficiently, by \ReduceScatter+\Gather\ and \Scatter+\AllGather, resp.
Unfortunately, \TSQR's \Reduce- and \Broadcast-like collectives preclude these optimizations.
Next in \cref{sec:1DQR}, we will show how similar savings are achievable.

\section{1D-CAQR-EG}
\label{sec:1DQR}

We now present a new algorithm, \ODQR, an instantiation of the template \QREG\ (\cref{alg:QREG}).
\ODQR\ effectively reduces \TSQR's bandwidth cost by a logarithmic factor, at the expense of increasing its latency cost by a comparable factor.

The input\slash output data distributions are the same as for \TSQR,
so we continue notation from \cref{sec:CAQR}.
We specify \ODQR\ by stepping line-by-line through \QREG\ ---
base case in \cref{sec:1DQR-bc} and inductive case in \cref{sec:1DQR-ic} ---
then prove \cref{thm:1DUB} in \cref{sec:1DQR-an}.

\subsection{Base Case}
\label{sec:1DQR-bc}

\ODQR's base-case QR decomposition subroutine (\cref{line:QREG:base-case}) is \TSQR\ (\cref{sec:CAQR}), using the same root processor.
Note that $\A$'s distribution satisfies \TSQR's requirements, and $\V$, $\T$, and $\R$ are returned distributed as required by \ODQR.
%

\subsection{Inductive Case}
\label{sec:1DQR-ic}

Let us walk through the inductive case line-by-line. 
All algorithmic costs are incurred in the two recursive calls (\cref{line:QREG:left-rec,line:QREG:right-rec}) and the six matrix multiplications (\cref{%
line:QREG:update-1,%
line:QREG:update-2,%
line:QREG:update-3,%
line:QREG:T-assemble-1,%
line:QREG:T-assemble-2,%
line:QREG:T-assemble-3}).

\vspace{1mm}\noindent\textbf{(\cref{line:QREG:split}):} 
the splitting involves no computation nor communication.

\vspace{1mm}\noindent\textbf{(\cref{line:QREG:left-rec}):}
the left recursive call is valid since 
$\left[\begin{smallmatrix}\A_{11}\\\A_{21}\end{smallmatrix}\right]$ 
still satisfies the data distribution requirements (only $n$ decreases).

\vspace{1mm}\noindent\textbf{(\cref{line:QREG:update-1}):} 
this is a \TDMM\ with matrix dimensions $I = \lfloor n/2 \rfloor$, $J = \lceil n/2 \rceil$, and $K = m$.
We choose a (1D) processor grid with $Q=R=1$ and $S=P$, thus $T=0$ and the partitions $\{\II_q\}_q = \{[I]\}$ and $\{\JJ_r\}_r = \{[J]\}$ are trivial.
We pick the partition $\{\KK_s\}_s$ to match the distribution of $\A$'s rows, and pick $\{\cA_{q,r,s}\}_{q,r,s} = \{ [I] \times \KK_s\}_s$ and $\{\cB_{q,r,s}\}_{q,r,s} = \{ \KK_s \times  [J]\}_s$.
Additionally, we set $\cC_{q,r,s}=\emptyset$ for all $(q,r,s)$ but the root processor.

\vspace{1mm}\noindent\textbf{(\cref{line:QREG:update-2}):} 
this is an \MM\ on the root processor with matrix dimensions $I=K=\lfloor n/2\rfloor$ and $J = \lceil n/2 \rceil$.

\vspace{1mm}\noindent\textbf{(\cref{line:QREG:update-3}):} 
this is a \TDMM\ with matrix dimensions $I = m$, $J = \lceil n/2 \rceil$, and $K=\lfloor n/2 \rfloor$, followed by a matrix subtraction.
We choose a (1D) processor grid with $Q=P$ and $R=S=1$, thus $T=0$ and the partitions $\{\JJ_r\}_r = \{[J]\}$ and $\{\KK_s\}_s = \{[K]\}$ are trivial.
We pick the partition $\{\II_r\}_r$ to match the distribution of $\A$'s rows, and pick $\{\cA_{q,r,s}\}_{q,r,s} = \{ \II_r \times [K]\}_r$ and $\{\cC_{q,r,s}\}_{q,r,s} = \{ \II_r \times [J]\}_r$.
Additionally, we set $\cB_{q,r,s}=\emptyset$ for all $(q,r,s)$ but the root processor.
The row-wise distribution $\{\cC_{q,r,s}\}_{q,r,s}$ enables the subsequent matrix subtraction to be performed without further communication.

\vspace{1mm}\noindent\textbf{(\cref{line:QREG:right-rec}):} 
the second recursive call is valid since $\M{B}_{22}$ still satisfies the data distribution requirements: 
the number of rows owned by the root processor decreases by the same amount that $n$ does, while
all other processors keep the same number.

\vspace{1mm}\noindent\textbf{(\cref{line:QREG:V-assemble}):} 
each processor assembles its local rows of $\V$: no computation nor communication is required.

\vspace{1mm}\noindent\textbf{(\cref{line:QREG:T-assemble-1}):}
this is a \TDMM\ with matrix dimensions $I = \lfloor n/2\rfloor$, $J = \lceil n/2 \rceil$, and $K = m{-}\lfloor n/2 \rfloor$;
we choose a processor grid and partitions as in \cref{line:QREG:update-1}.

\vspace{1mm}\noindent\textbf{(\cref{line:QREG:T-assemble-2}):} 
this is an \MM\ on the root processor with the same dimensions as \cref{line:QREG:update-2}.

\vspace{1mm}\noindent\textbf{(\cref{line:QREG:T-assemble-3}):} 
this is an \MM\ on the root processor with the same dimensions as \cref{line:QREG:update-2,line:QREG:T-assemble-2}.

\vspace{1mm}\noindent\textbf{(\cref{line:QREG:R-assemble}):} 
each processor assembles its local rows of $\R$: no computation nor communication is required.

\vspace{1mm}
Verify that $\V$, $\T$, and $\R$ are distributed as desired.

\subsection{Concluding the Analysis}
\label{sec:1DQR-an}

\ODQR\ is valid for any $P,m,n,b \ge 1$ such that $P \le m / n$, and there is no loss of generality to suppose $b \le n$.
When $b = n$, \ODQR\ reduces to \TSQR.
As we will see, picking $b < n$ allows us to reduce \ODQR's arithmetic and bandwidth costs --- while increasing its latency cost --- to appear as if we had used bidirectional exchange \Reduce\ and \Broadcast\ algorithms (\cref{sec:model}) within \TSQR, despite the fact that these algorithms are inapplicable, as we lamented at the end of \cref{sec:CAQR}.
We will navigate the tradeoff with a nonnegative parameter $\epsilon$, taking
\begin{align}
\label{eq:1DQR-b}
b &= \Theta\lt( n / (\log P)^\epsilon \rt)
\text.
\end{align}
We will show that taking $\epsilon = 1$ yields \cref{thm:1DUB}.

\begin{lemma}
\label{lem:1DQR}
If $P = O(b^2)$, \textnormal{\ODQR}\ has runtime
\begin{multline}
\label{eq:1DUB-alt}
\gamma \cdot \lt( \frac{mn^2}{P} + nb^2 \log P\rt) 
\\ + \beta\cdot O\lt( n^2 + nb\log P\rt) 
+\alpha\cdot O\lt( \frac{n}{b}\log P \rt)
\text.
\end{multline}
\end{lemma}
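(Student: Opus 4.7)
The plan is to induct on $n$, analyzing the base case ($n \le b$) separately and unrolling the recursion of \cref{alg:QREG} as specialized in \cref{sec:1DQR-ic}. For the base case, \ODQR\ reduces to a \TSQR\ call; applying \cref{lem:TSQR} with $\max_p m_p = O(m/P)$ yields $\gamma\,O(mn^2/P + n^3\log P) + \beta\,O(n^2\log P) + \alpha\,O(\log P)$. The row-distribution bound $\max_p m_p = O(m/P)$ holds at the top level by hypothesis and persists through the recursion because rows only disappear (at the root), never migrate.

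For the inductive case, I would apply \cref{lem:0DMM,lem:1DMM} to the six matrix-multiplication steps, using the processor-grid and partition choices specified in \cref{sec:1DQR-ic}. Each of the three \ODMM\ calls (\cref{line:QREG:update-1,line:QREG:update-3,line:QREG:T-assemble-1}) has $\max(I,J,K) = \Theta(m)$ and $IJK/\max(I,J,K) = \Theta(n^2)$, so each contributes $\gamma\,O(mn^2/P) + \beta\,O(n^2) + \alpha\,O(\log P)$. Each of the three local \MM\ calls (\cref{line:QREG:update-2,line:QREG:T-assemble-2,line:QREG:T-assemble-3}) costs $\gamma\,O(n^3)$ on one processor; splits, assemblies, and the subtraction in \cref{line:QREG:update-3} are local and free. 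A point to verify here is that $P = O(b^2)$ is precisely what makes \cref{lem:1DMM}'s hypothesis $P = O(IJK/\max(I,J,K))$ hold at every recursion level: since the smallest column dimension encountered at an inductive node is $\Theta(b)$, the constraint $P = O(n_\ell^2)$ tightens to $P = O(b^2)$ at the deepest level. Summing, the per-node non-recursive cost is $C(m,n) = \gamma\,O(mn^2/P + n^3) + \beta\,O(n^2) + \alpha\,O(\log P)$.

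Next, I would unroll $T(m,n) \le T(m,\lceil n/2\rceil) + T(m-\lfloor n/2\rfloor, \lceil n/2\rceil) + C(m,n)$ along the binary recursion tree, whose depth is $\lceil \log_2(n/b)\rceil$ with $\Theta(n/b)$ leaves. At level $\ell$ the $2^\ell$ subproblems together contribute $2^\ell \cdot C(m, n/2^\ell)$, giving geometric sums $\gamma\,O(mn^2/P + n^3)$ for arithmetic, $\beta\,O(n^2)$ for bandwidth, and $\alpha\,O((n/b)\log P)$ for latency (the last dominated by the deepest level rather than equidistributed). Adding the $\Theta(n/b)$ base-case contributions tacks on $\gamma\,O(mnb/P + nb^2\log P) + \beta\,O(nb\log P)$, which absorbs into the advertised bound via $b \le n$ (so $mnb/P \le mn^2/P$) and $P \le m/n$ (so $n^3 \le mn^2/P$). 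The main obstacle is bookkeeping the asymmetric recurrence: the right subcall operates on a matrix with only $m - \lfloor n/2\rfloor$ rows, and the root's local row count shrinks while the others' stay fixed. One must confirm that $\max_p m_p = O(m/P)$ remains valid throughout, and that \cref{lem:1DMM}'s hypotheses still apply with $K = m - \Theta(n)$ (as in \cref{line:QREG:T-assemble-1}); both follow from $m \ge nP$, and once they are verified the geometric-sum calculation yields the stated runtime.
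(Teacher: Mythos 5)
Your proof is correct and follows essentially the same approach as the paper: derive the per-node cost from \cref{lem:0DMM,lem:1DMM,lem:TSQR}, identify the recurrence, replace $m-\lfloor n/2\rfloor$ by $m$ via monotonicity, and unroll over the $\log_2(n/b)$-deep binary tree. Your version is somewhat more explicit about the geometric sums over levels and about why the leaf contributions ($\gamma\,O(mnb/P + nb^2\log P) + \beta\,O(nb\log P)$) and the $\gamma\,O(n^3)$ term are absorbed, but the decomposition and key lemmas match the paper's proof.
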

%
\begin{proof}
Here we give an expanded proof of \cref{lem:1DQR}.

Let us derive an upper bound $T(m,n)$ on the runtime of an \ODQR\ invocation.
(The unchanging parameters $P,b$ are implicit.)
We will now assume a balanced data distribution, meaning the numbers of rows any two processors owns differ by at most one.

When $P=1$, the runtime of \ODQR\ for any $b$ is just $\gamma \cdot O\lt( mn^2 \rt)$, which satisfies the conclusion, so we may assume $P > 1$ hereafter.

In the base case ($n \le b$), the algorithmic cost is the \TDQR\ call, so by \cref{lem:TSQR} we conclude that
\begin{multline*}
T(m,n) =
\gamma \cdot O\lt( \frac{mn^2}{P} + n^3 \log P \rt) 
\\+ \beta\cdot O\lt( n^2 \log P\rt) 
+\alpha\cdot O\lt(\log P \rt)
\text.
\end{multline*}

In the inductive case, the algorithmic cost is due to the two recursive calls, the three \ODMM\ calls (\cref{line:QREG:update-1,line:QREG:update-3,line:QREG:T-assemble-1}, performed on 1D processor grids, and the three (local) \MM\ calls (\cref{line:QREG:update-2,line:QREG:T-assemble-2,line:QREG:T-assemble-3}), performed by the root.

The local \MM{s} have runtime $\gamma \cdot O\lt( n^3 \rt)$.

To apply \cref{lem:1DMM} to the \TDMM\ calls, 
let us now suppose that $P = O\lt(n^2\rt)$ and $P = O(m)$, 
Actually, only the first assumption is new: we already know that $P \le m / n$ for the initial $m,n$, and when $P > 1$ we see that in any recursive call the current $m$ is within a factor of two of the initial $m$, hence $P=O(m)$ in any recursive call.
Confirming that the data distributions chosen in \cref{sec:1DQR-ic} match those in the proof of \cref{lem:1DMM}, the \ODMM{s}' runtime is
\[ 
\gamma \cdot O\lt( \frac{mn^2}{P} \rt) 
+ \beta\cdot O\lt( n^2 \rt) 
+\alpha\cdot O\lt(\log P \rt)
\text.
\] 

Overall, we have found that
\begin{multline*}
T(m,n) = 
T(m,\lfloor n/2) 
+ T(m {-} \lfloor n/2 \rfloor, \lceil n/2 \rceil) 
\\+ \gamma \cdot O\lt( \frac{mn^2}{P} \rt) 
+ \beta\cdot O\lt( n^2 \rt) 
+\alpha\cdot O\lt(\log P \rt)
\text.
\end{multline*}

By induction we may take $T(m,n)$ to be nondecreasing in both $m$ and $n$.
The former property justifies replacing $m{-}\lfloor n/2\rfloor$ by $m$ in the second recursive call.
Hence, we will take $m$ to be its initial value in the analysis of every recursive call.

Supposing $n = b2^L$ for a nonnegative integer $L$,
$T(m,n)$ is bounded by \cref{eq:1DUB-alt}.

Now observe that $n=b2^{L+1}$ reproduces the asymptotic bound \cref{eq:1DUB-alt}, and since $T(m,n)$ is nondecreasing in $n$, this bound also holds when $b2^L < n < b2^{L+1}$.

Requiring $P = O(b^2)$ suffices to ensure that $P = O(n^2)$ at every inductive case.
\end{proof}

\begin{proof}[of \cref{thm:1DUB}]
Substituting \cref{eq:1DQR-b} into \cref{eq:1DUB-alt},
\begin{multline*}
\gamma \cdot \lt( \frac{mn^2}{P}\lt( 1 + \frac{nP}{m}(\log P)^{1-2\epsilon} \rt) \rt) 
\\ + \beta\cdot O\lt( n^2 \lt(1 + (\log P)^{1-\epsilon}\rt) \rt) 
+\alpha\cdot O\lt( (\log P)^{1+\epsilon} \rt)
\text,
\end{multline*}
thence the hypothesis is $P(\log P)^{2\epsilon} = O\lt(n^2\rt)$.
We conclude by taking $\epsilon = 1$.
\end{proof}

This argument extends to any $\epsilon \ge 0$, assuming $P(\log P)^{2\epsilon} = O(n^2)$,
but the asymptotic tradeoff vanishes when $\epsilon > 1$.
For $1/2 \le \epsilon \le 1$ the tradeoff is only between bandwidth and latency.
A sensible interpretation of the case $\epsilon < 0$ is $b = n$, meaning \TSQR\ is invoked immediately.
In this case, the costs are given directly by \cref{lem:TSQR}.

\section{3D-CAQR-EG}
\label{sec:3DQR}

We now present our second new algorithm, \TDQR, another instantiation of the template \QREG\ (\cref{alg:QREG}).

On input, the $m\times n$ matrix $\A$ ($m \ge n$) is partitioned across the $P$ processors row\-/cyclically: thus, each processor owns at most $\lceil m / P \rceil$ rows.

On output, the Q-factor is stored in Householder representation $(\V,\,\T)$, where $\V$ has the same distribution as $\A$.
Both $\T$ and the R-factor have the same distribution, matching the top $n \times n$ submatrix of $\A$.

After walking through \TDQR\ in a similar fashion as we did for \ODQR\ in \cref{sec:1DQR}, we collect the results to prove \cref{thm:3DUB}.

In the following, $T_\TDQR$ denotes an upper bound on the runtime of $\TDQR$, a function of $P,m,n,b$.

\subsection{Base Case}
\label{sec:3DQR-bc}
Recall that $b$ denotes the recursive threshold for \TDQR.
\TDQR's base-case QR decomposition subroutine (\cref{line:QREG:base-case}) is \ODQR, with a fixed recursive threshold $b^*$.

To satisfy \ODQR's data distribution requirements, we convert $\A$ from row\-/cyclic to block\-/row layout, distributed over 
\[ P^* = \min\lt(P,\, \lfloor m/n \rfloor\rt) \]
processors, ensuring each owns at least $n$ rows and one owns the top $n$ rows (and perhaps others). 

Initially, $P' = \min(m, P)$ processors own rows of $\A$.
(Clearly $P' \le P$ with equality just in case $P \le m$;
note further that $P^* \le P'$ with equality just in case $P \le m/n$.) 
Number these processors from $0$ to $P'{-}1$ according to the cyclic layout of $\A$, so that processor~$0$ owns the top row of $A$.
Deal these processors among $P^*$ groups, so processor~$0$ goes into group~$0$, processor~$1$ goes into group~$1$, and so on. 
Represent each group by its lowest-numbered processor and within each group, \Gather\ $\A$'s rows to the representative.
Since each group contains at most $\lceil P' / P^* \rceil$ processors and each processor initially owns at most $\lceil m/P' \rceil$ rows of $\A$, the largest block-size in any \Gather\ is at most $\lceil m/P' \rceil n$.

Each of the $P^*$ representatives (including processor $0$) now owns at least $\lfloor m / P^* \rfloor \ge n$ rows of $\A$, satisfying the first part of \ODQR's data distribution requirements: it remains to ensure processor~$0$ owns the top $n$ rows of $\A$.
These rows are currently owned by the first $P''=\min(P^*,n)$ representatives.
(Clearly $P'' \le P^*$ with equality just in case $P \le n$.)
We next perform a \Gather\ over the representatives of groups~$0$ through~$P''{-}1$, taking processor~$0$ to be the root so that afterwards it owns the top $n$ rows of $\A$ (and perhaps others).
We also perform a \Scatter\ with the opposite communication pattern so that the overall number of rows per representative is unchanged.
The largest block-size in both the \Gather\ and the \Scatter\ is at most $\lceil n / P'' \rceil n$.

We can now invoke \ODQR, with parameters $P^*,b^*$.
After it returns, we redistribute $\V$, $\T$, and $\R$ by reversing the preceding \Gather{s}\slash\Scatter{s}, so that $\V$ is (resp., $\T$ and $\R$ are) distributed over all $P$ processors like $\A$ was (resp., $\A$'s first $n$ rows were) initially.

\subsection{Inductive Case}
\label{sec:3DQR-ic}

Let us walk through the inductive case line-by-line, as we did for \ODQR. 
All algorithmic costs are incurred in the two recursive calls (\cref{line:QREG:left-rec,line:QREG:right-rec}) and the six matrix multiplications (\cref{%
line:QREG:update-1,%
line:QREG:update-2,%
line:QREG:update-3,%
line:QREG:T-assemble-1,%
line:QREG:T-assemble-2,%
line:QREG:T-assemble-3}).

\vspace{1mm}\noindent\textbf{(\cref{line:QREG:split}):} 
the splitting involves no computation nor communication.

\vspace{1mm}\noindent\textbf{(\cref{line:QREG:left-rec}):}
the left recursive call is valid since 
$\left[\begin{smallmatrix}\A_{11}\\\A_{21}\end{smallmatrix}\right]$ 
still satisfies the data distribution requirements (only $n$ decreases).

\vspace{1mm}\noindent\textbf{(\cref{line:QREG:update-1}):} 
this is a \TDMM\ with matrix dimensions $I = \lfloor n/2 \rfloor$, $J = \lceil n/2 \rceil$, and $K = m$.
We do not yet specify the processor grid, but we do suppose that \TDQR\ uses a balanced parallelization and data distribution as in the proof of \cref{lem:3DMM}: this is possible for any processor grid. 

To match this data distribution, we perform an \AlltoAll\ before and after the \TDMM\ invocation, each time using the two-phase approach \cite{BHKUW97}.
The first \AlltoAll\ redistributes the input matrices from column- and row-cyclic to \TDMM\ layout (the left factor is row-cyclic, transposed);
the maximum number of input matrix entries any processor owns before or after this collective is at most
\[ \max\lt( 
I\lt\lceil\frac{K}{P}\rt\rceil + \lt\lceil\frac{K}{P}\rt\rceil J,\, 
\lt\lceil\frac{\lt\lceil\frac{I}{Q}\rt\rceil\lt\lceil\frac{K}{S}\rt\rceil}{R}\rt\rceil +
\lt\lceil\frac{\lt\lceil\frac{J}{R}\rt\rceil\lt\lceil\frac{K}{S}\rt\rceil}{Q}\rt\rceil 
\rt) \text, \]
where the processor grid is $Q \times R \times S$.
The second \AlltoAll\ converts the output matrix from \TDMM\ layout to row-cyclic layout;
the maximum number of output matrix entries any processor owns before or after this collective is at most 
\[ \max\lt( 
\lt\lceil\frac{\lt\lceil\frac{I}{Q}\rt\rceil\lt\lceil\frac{J}{R}\rt\rceil}{S}\rt\rceil,\,
\lt\lceil\frac{I}{P}\rt\rceil J 
\rt)\text. \] 

\vspace{1mm}\noindent\textbf{(\cref{line:QREG:update-2}):} 
this is a \TDMM\ with matrix dimensions $I=K=\lfloor n/2\rfloor$ and $J = \lceil n/2 \rceil$.
We pick a 3D processor grid and partitions satisfying the same constraints as for \cref{line:QREG:update-1}, and perform similar \AlltoAll{s} before and after, so we can reuse the preceding analysis, substituting $I$, $J$, and $K$.

\vspace{1mm}\noindent\textbf{(\cref{line:QREG:update-3}):} 
this is a \TDMM\ with matrix dimensions $I = m$, $J = \lceil n/2 \rceil$, and $K=\lfloor n/2 \rfloor$, followed by a matrix subtraction.
We proceed similarly to \cref{line:QREG:update-1,line:QREG:update-2}, 
except that the left factor is initially in row-cyclic layout, so the first summand in the first term of the maximum becomes $\lceil I/P\rceil K$ (vs.\ $I\lceil K/P\rceil$).

\vspace{1mm}\noindent\textbf{(\cref{line:QREG:right-rec}):} 
the second recursive call is valid since $\M{B}_{22}$ still satisfies the data distribution requirements:
in particular, unlike \ODQR\ there is no requirement that a fixed processor owns the first $n$ rows or that every processor owns at least $n$ rows.

\vspace{1mm}\noindent\textbf{(\cref{line:QREG:V-assemble}):} 
each processor assembles its local rows of $\V$: no computation nor communication is required.

\vspace{1mm}\noindent\textbf{(\cref{line:QREG:T-assemble-1}):}
this is a \TDMM\ with matrix dimensions $I = \lfloor n/2\rfloor$, $J = \lceil n/2 \rceil$, and $K = m{-}\lfloor n/2 \rfloor$;
we choose a processor grid, partitions, and \AlltoAll{s} as in \cref{line:QREG:update-1}.

\vspace{1mm}\noindent\textbf{(\cref{line:QREG:T-assemble-2}):} 
this is a \TDMM\ with the same dimensions, processor grid, partitions, and \AlltoAll{s} as \cref{line:QREG:update-2}.

\vspace{1mm}\noindent\textbf{(\cref{line:QREG:T-assemble-3}):} 
this is a \TDMM\ with the same dimensions, processor grid, partitions, and \AlltoAll{s} as \cref{line:QREG:update-2,line:QREG:T-assemble-2}. 

\vspace{1mm}\noindent\textbf{(\cref{line:QREG:R-assemble}):} 
each processor assembles its local rows of $\R$: no computation nor communication is required.

\vspace{1mm}
Verify that $\V$, $\T$, and $\R$ are distributed as desired.

\subsection{Concluding the Analysis}
\label{sec:3DQR-an}

\TDQR\ is valid for any $P,m,n,b,b^* \ge 1$, and there is no loss of generality to suppose $b^* \le b \le n$.
Taking $b = n$ simplifies \TDQR\ to \ODQR\ with parameters $P^*,b^*$ and additional data redistributions. 
As in the case of \ODQR, picking $b < n$ allows us to reduce \TDQR's arithmetic and bandwidth costs, while increasing its latency cost.

We will navigate this tradeoff with two nonnegative parameters $\delta,\epsilon$, taking
\begin{align}
\label{eq:3DQR-b}
b &= \Theta\lt( n / \lt(nP / m\rt)^\delta \rt)
\text,&
b^* &= \Theta\lt( b / (\log P)^\epsilon  \rt)
\text.
\end{align}
We prove \cref{thm:3DUB} with $\delta \in [1/2,2/3]$ and $\epsilon = 1$.

\begin{lemma}
\label{lem:3DQR}
If $P = O(b^2)$ and $P^* = O({b^*}^2)$, \textnormal{\TDQR}\ has runtime
\begin{multline}
\label{eq:3DUB-alt}
T(m,n) = 
\gamma \cdot \lt( \frac{mn^2}{P} + n{b^*}^2 \log P\rt) 
\\ + \beta\cdot O\Bigg(\frac{mn}{P} + nb + nb^* \log P  + \lt(\frac{mn^2}{P}\rt)^{2/3}
\\+ \lt(\lt(\frac{mn}{P} + n\rt)\log\frac{n}{b} + \frac{nP^2}{b}\rt) \log P\Bigg) 
\\+\alpha\cdot O\lt( \frac{n}{b^*}\log P \rt)
\text.
\end{multline}
\end{lemma}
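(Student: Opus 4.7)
The plan is to mirror the proof of \cref{lem:1DQR}: set up a recurrence for $T(m,n)$, argue inductively that $T$ is nondecreasing in both arguments (so $T(m-\lfloor n/2\rfloor,\lceil n/2\rceil)$ can be upper-bounded by $T(m,\lceil n/2\rceil)$, symmetrizing the split), and unroll the resulting recurrence down to the base case at $n = b$.

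For the base case ($n \le b$), I would account for the \Gather\ and \Scatter\ operations described in \cref{sec:3DQR-bc} that consolidate $\A$ into the $P^*$-processor block-row layout required by \ODQR, then invoke \cref{lem:1DQR} (legal under the assumption $P^* = O({b^*}^2)$), then account for the reverse redistribution after \ODQR\ returns. Each collective is bounded via \cref{tab:collectives} using the block-sizes $\lceil m/P' \rceil n$ and $\lceil n/P'' \rceil n$ derived in \cref{sec:3DQR-bc}. Multiplying the per-leaf cost by the $n/b$ leaves of the outer recursion tree produces the base-case contributions, which account for the $\gamma \cdot n{b^*}^2\log P$, $\beta \cdot O(nb+nb^*\log P)$, and $\alpha \cdot O((n/b^*)\log P)$ terms in \cref{eq:3DUB-alt}.

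For the inductive case, each of the six matrix multiplications is a \TDMM: three have $IJK = \Theta(mn^2)$ (\cref{line:QREG:update-1,line:QREG:update-3,line:QREG:T-assemble-1}) and three have $IJK = \Theta(n^3)$ (\cref{line:QREG:update-2,line:QREG:T-assemble-2,line:QREG:T-assemble-3}). By \cref{lem:3DMM}, the six \TDMM{s} together cost $\gamma \cdot O(mn^2/P) + \beta \cdot O((mn^2/P)^{2/3}) + \alpha \cdot O(\log P)$. For each \TDMM\ I would also bound the cost of the two flanking \AlltoAll{s} via \cref{tab:collectives} with $B$ and $B_*$ set to the per-processor volumes displayed in \cref{sec:3DQR-ic}, using the $(B_*+P^2)\log P$ branch of the $\min$; this yields a $\beta \cdot O((mn/P+n)\log P)$ term from $B_*$ and a $\beta \cdot O(P^2\log P)$ term from the explicit $P^2$.

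Finally I would solve the recurrence $T(m,n) = 2T(m,n/2) + C(m,n)$ of depth $L = \log_2(n/b)$. The $\gamma \cdot mn^2/P$ and $\beta \cdot (mn^2/P)^{2/3}$ \TDMM\ contributions give geometrically-decreasing series in $\ell$ once multiplied by the $2^\ell$ subproblems at level $\ell$, so the root-level value dominates. The per-level latency and \AlltoAll\ bandwidth accumulate: the $P^2\log P$ terms grow across the $2^L = n/b$ leaves to give $O((nP^2/b)\log P)$, and the $(mn/P+n)\log P$ picks up an extra $\log(n/b)$ factor when summed over levels. The main obstacle is the careful bookkeeping of \AlltoAll\ block sizes before and after each \TDMM\ --- each of the six must be evaluated separately since the row-cyclic input/output layouts differ from the \TDMM-internal layout of \cref{lem:3DMM}, and it is verifying these volumes (rather than the ensuing arithmetic) that establishes the precise form of the bandwidth terms in \cref{eq:3DUB-alt}, particularly the $nP^2/b$ contribution from multiplying the per-inductive-step $P^2$ by the $n/b$ leaves.
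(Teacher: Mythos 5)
Your proposal is correct and follows essentially the same route as the paper's proof: set up the recurrence with a monotonicity argument to symmetrize the split, bound the base case by the four \Gather/\Scatter redistribution phases plus the \cref{lem:1DQR} bound (with $P^*,b^*$ substituted for $P,b$), bound the inductive case by six \TDMM{s} via \cref{lem:3DMM} plus twelve \AlltoAll{s} via \cref{tab:collectives}, and unroll to depth $\log_2(n/b)$ — identifying correctly which terms decrease geometrically (the $mn^2/P$ and $(mn^2/P)^{2/3}$ pieces), which pick up an extra $\log(n/b)$, and which accumulate across the $n/b$ leaves to give the $nP^2/b$ and $n{b^*}^2\log P$ terms. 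The one detail the paper makes explicit that you leave implicit is verifying the hypothesis of \cref{lem:3DMM} at each inductive node — i.e., that $c\cdot IJK/\min(I,J,K)^3 \le P \le IJK$ holds for each of the six products, which the paper derives from $n \ge b+1$ and a mild lower bound $b \ge 2P^{1/3}$ (subsumed by the lemma's hypothesis $P=O(b^2)$); you should state this check when writing up.
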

%
\begin{proof}
Here we give an expanded proof of \cref{lem:3DQR}.

Let us derive an upper bound $T(m,n)$ on the runtime of a \TDQR\ invocation.
(The unchanging parameters $P,b,b^*$ are implicit.)

When $P=1$, the runtime of \TDQR\ is just $\gamma \cdot O\lt( mn^2 \rt)$, which satisfies the conclusion, so we may assume $P > 1$ hereafter.

In a base case ($n \le b$), the algorithmic costs are due to the \ODQR\ invocation and the four communication phases.

The first communication phase, involving $P^*$ independent \Gather{s}, has runtime bounded by
\begin{multline*}
\beta \cdot O\lt( (\lceil P'/P^*\rceil - 1)\lceil m/P' \rceil n \rt) + \alpha \cdot O\lt( \log \lceil P'/P^* \rceil \rt)
\\= \beta \cdot O\lt( mn/P + n^2 \rt) + \alpha \cdot O\lt( \log P \rt)\text,
\end{multline*}
and the same bound applies for the last phase (matching \Scatter{s}).
(Recall that $P^*=\min(P,\lfloor m/n\rfloor)$ and $P' = \min(P,m)$.)
The fact that no communication happens when $m \ge nP$ 
(and thus $P^*=P$)
is evident in the first bound but not the second.

The second communication phase, a simultaneous \Gather\slash\Scatter, has runtime bounded by
\begin{multline*}
\beta \cdot O(\lt( (P''-1)\lceil n/P''\rceil n\rt) + \alpha \cdot O\lt( \log P'' \rt)
\\= \beta \cdot O\lt( n^2 \rt) + \alpha \cdot O\lt( \log P \rt)\text,
\end{multline*}
and the same bound applies for the third phase (matching \Scatter\slash\Gather).
(Recall that $P'' = \min(P^*,n)$.)

A runtime bound for the \ODQR\ invocation is given by \cref{thm:1DUB}, supposing now that $P^* = O({b^*}^2)$.
Actually we use the more refined bound of \cref{eq:1DUB-alt}, substituting $P^*,b^*$ for $P,b$.

Altogether, in a base case, 
\begin{multline*}
T(m,n) = 
\gamma\cdot O\lt( \frac{mn^2}{P} + n{b^*}^2 \log P \rt) 
\\+\beta \cdot O\lt( \frac{mn}{P} + n^2 + nb^* \log P \rt)
+\alpha \cdot O\lt( \frac{n}{b^*} \log P \rt)
\end{multline*}

In the inductive case ($n > b$), the algorithmic cost is due to 
the two recursive calls, 
the six \TDMM{s}, performed on 3D processor grids, and 
the twelve \AlltoAll{s}, performed before and after each \TDMM.

To apply \cref{lem:3DMM} to the \TDMM{s}, 
let us now suppose that $P \ge (3c)^{3/4}$ for some $c > 1$ and $b \ge 2 P^{1/3}$:
since $m \ge n$ at every recursive call and since $n \ge b+1$ in the inductive case,
$c\cdot IJK/\min(I,J,K)^3 \le P \le IJK$ for each \TDMM.
%
%
Thus the \TDMM{s}' overall runtime is
\[ 
\gamma\cdot O\lt( \frac{mn^2}{P} \rt) + \beta \cdot O\lt( \lt(\frac{mn^2}{P}\rt)^{2/3} \rt) + \alpha \cdot O\lt(\log P\rt)
\text.\]
Moreover, the inequalities derived at the beginning of \cref{lem:3DMM}'s proof also yield the following upper bound on the overall runtime of the \AlltoAll{s},
\[ \beta \cdot O\lt(\lt(\frac{mn}{P} + n + P^2\rt)\log P\rt) + \alpha \cdot O\lt( \log P \rt)\text.\]

Overall, we have found that
\begin{multline*}
T(m,n) = 
T(m,\lfloor n/2\rfloor) 
+ T(m {-} \lfloor n/2 \rfloor, \lceil n/2 \rceil) 
\\+ \gamma \cdot O\lt( \frac{mn^2}{P} \rt) 
+\beta\cdot O\Bigg( \lt(\frac{mn^2}{P}\rt)^{2/3} 
\\+ \lt(\frac{mn}{P} + n + P^2\rt)\log P\Bigg)
+\alpha\cdot O\lt(\log P \rt)
\text.
\end{multline*}

By induction we may take $T(m,n)$ to be nondecreasing in both $m$ and $n$.
The former property justifies replacing $m{-}\lfloor n/2\rfloor$ by $m$ in the second recursive call.
Hence, we will take $m$ to be its initial value in the analysis of every recursive call.

Supposing $n = b2^L$ for a nonnegative integer $L$,
$T(m,n)$ is bounded by \cref{eq:3DUB-alt}.

Now observe that $n=b2^{L+1}$ reproduces the asymptotic bound of \cref{eq:3DUB-alt}, 
which thus holds when $b2^L < n < b2^{L+1}$ since $T(m,n)$ is nondecreasing in $n$.

In conclusion, 
\TDQR's runtime $T(m,n)$ satisfies \cref{eq:3DUB-alt} if $P \ge 3$, $b \ge 2P^{1/3}$, $P = O(b^2)$, and $P^* = O({b^*}^2)$.
\end{proof}

\begin{proof}[of \cref{thm:3DUB}]
Consider picking $b,b^*$ as in \cref{eq:3DQR-b}.
The constraints relating $P$ and $b$ are satisfiable if
\begin{equation}
\label{eq:3DQR-bconst}
P = O\lt( m^{\frac{2\delta}{1+2\delta}} \cdot n^{\frac{2-2\delta}{1+2\delta}} \rt)\text,
\end{equation}
and the constraint relating $P^*$ and $b^*$ is satisfiable if
\begin{equation}
\label{eq:3DQR-bstarconst}
P \cdot (\log P)^{\frac{2\epsilon}{1+2\delta}} = O\lt( m^{\frac{2\delta}{1+2\delta}} \cdot n^{\frac{2-2\delta}{1+2\delta}} \rt)\text,
\end{equation}
a stronger condition than \cref{eq:3DQR-bconst}.

Substituting \cref{eq:1DQR-b} into \cref{eq:1DUB-alt},
\begin{multline*}
\gamma \cdot \lt( \frac{mn^2}{P}\lt( 1 + \lt(\frac{nP}{m}\rt)^{1-2\delta}(\log P)^{1-2\epsilon} \rt) \rt) 
\\ + \beta\cdot O\lt( \frac{n^2}{\lt(\frac{nP}{m}\rt)^\delta} \lt(1 + (\log P)^{1-\epsilon}\rt) + W \rt) 
\\+\alpha\cdot O\lt( (nP/m)^\delta (\log P)^{1+\epsilon} \rt)
\text,
\end{multline*}
where $W$ denotes the sum of three terms associated with the \AlltoAll{s}, 
\[ 
\frac{mn}{P}\log\frac{nP}{m} \log P
\text,\;\;
n \log\frac{nP}{m} \log P
\text,\;\;
P^2\lt(\frac{nP}{m}\rt)^\delta \log P
\text,
\]
plus a term $n^2 / (nP/m)^{2/3}$ associated with the \TDMM{s}.
We obtain the stated arithmetic and latency costs by taking $\delta \ge 1/2$ and $\epsilon = 1$.
To suppress the bandwidth term $W$, it suffices to require that there exists $\delta' \in (0,1{-}\delta)$, hence $\delta < 1$, such that
\begin{equation}
\label{eq:3DQR-aconst}
\begin{aligned}
P \mathbin{/} (\log P)^{\frac{\epsilon}{1-\delta-\delta'}} &= \Omega \big(m \mathbin{/} n \big)\text,
\\
P \cdot (\log P)^{\frac{\epsilon}{\delta + \delta'}} &= O\lt( m \cdot n^{\frac{1-\delta-\delta'}{\delta+\delta'}} \rt)\text,
\\
P \cdot (\log P)^{\frac{\epsilon}{2+2\delta}} &= O\lt( m^{\frac{\delta}{1+\delta}} \cdot n^{\frac{1-\delta}{1+\delta}} \rt)\text.
\end{aligned}
\end{equation} 
The \TDMM{s}' bandwidth cost cannot be reduced, but it is lower\-/ order if $\delta \le 2/3$.

The hypotheses of \cref{thm:3DUB}, $\delta \in [1/2,2/3]$ (and tacitly $\epsilon = 1$) and \cref{eq:3DUB-hyp},
imply \cref{eq:3DQR-bconst,eq:3DQR-bstarconst,eq:3DQR-aconst}.  
\end{proof}

This argument extends to a larger range of nonnegative $\delta,\epsilon$.
Assuming fixed $\epsilon$,
for $\delta > 2/3$ the \TDMM\ invocations dominate the bandwidth cost, whose bound remains as if $\delta = 2/3$, no longer a tradeoff, at least asymptotically.
In the case $0 \le \delta < 1/2$, the additive term in the arithmetic cost,
due to the small (mostly) triangular matrix operations on \TSQR's critical path,
possibly dominates.
(A sensible interpretation of the case $\delta \le 0$ is $b = n$, in which case \ODQR\ is invoked immediately.)
Assuming fixed $\delta$, the tradeoffs due to varying $\epsilon \in [0,1]$ are just as in the proof of \cref{thm:1DUB},
except now the factor in the arithmetic cost is suppressed by increasing $\delta$.

\section{Discussion}
\label{sec:discussion}

We have presented two new algorithms, \ODQR\ and \TDQR\ (\cref{sec:1DQR,sec:3DQR}), for computing QR decompositions on distributed\-/memory parallel machines.
Our analysis (e.g., \cref{eq:1DUB-alt,eq:3DUB-alt}) demonstrates tradeoffs between arithmetic, bandwidth, and latency costs,
governed by the choice of one (\ODQR)\ or two (\TDQR)\ block sizes.
We navigated these tradeoffs in \cref{thm:1DUB,thm:3DUB} by asymptotically minimizing arithmetic, 
as well as bandwidth in \cref{thm:1DUB}.

\subsection{Comparison With Similar Algorithms}
\label{sec:compare}

Here we compare the two new algorithms with four other instances of the \rQR\ framework,
deriving \cref{tab:costcomp1D,tab:costcomp2D}. 
Let us review the other algorithms.

\begin{table*}\centering
\begin{tabular}{c|c|c|c}
\emph{algorithm} & \emph{\#\,operations} & \emph{\#\,words} & \emph{\#\,messages} \\\hline\hline
\HQRb & $mn^2 / P$ & $n^2 / (nP / m)^{1/2}$  & $n \log P$ \\
\CAQR & $mn^2 / P$ & $n^2 / (nP / m)^{1/2}$  & $(nP/m)^{1/2} (\log P)^2$\\
\TDQR & $mn^2 / P$ & $n^2 / (nP / m)^\delta$ & $(nP/m)^\delta (\log P)^2$ 
\end{tabular}
\caption{\label{tab:costcomp2D}
\textnormal{
Comparison of approaches for square-ish matrices ($m / n = O(P)$). 
The algorithms and the assumptions that support these bounds are explained in \cref{sec:compare}. 
(In line 3, $\delta$ varies from $1/2$ to $2/3$.)
}
}
\end{table*}

\begin{table*}\centering
\begin{tabular}{c|c|c|c}
\emph{algorithm} & \emph{\#\,operations} & \emph{\#\,words} & \emph{\#\,messages} \\\hline\hline
\HQR & $mn^2 / P$ & $n^2 \log P$  & $n \log P$ \\
\TSQR & $mn^2 / P + n^3 \log P$ & $n^2 \log P$  & $\log P$\\
\ODQR & $mn^2 / P + n^3 (\log P)^{1-2\epsilon}$ & $n^2 (\log P)^{1-\epsilon}$ & $(\log P)^{1+\epsilon}$ 
\end{tabular}
\caption{\label{tab:costcomp1D}
\textnormal{
Comparison of approaches for tall/skinny matrices ($m / n = \Omega(P)$). 
The algorithms and the assumptions that support these bounds are explained in \cref{sec:compare}. 
(In line 3, $\epsilon$ varies from $0$ to $1$.)
}
}
\end{table*}

An early and well-known instance of \rQR\ (\cref{alg:rQR}) was proposed by Householder~\cite{H58}.
It features a splitting strategy (\cref{line:rQR:split}) 
where $\A_{11}$ is $b \times b$ (\emph{right\-/looking}) or $n{-}b \times n{-}b$ (\emph{left\-/looking});
a base case threshold (\cref{line:rQR:base-condition}) asserting $n \le b$; and
 a base case subroutine (\cref{line:rQR:base-case}) generating a product of $b$ Householder reflectors. 
Householder's proposal was right\-/looking and \emph{unblocked}, meaning $b=1$ (vs.\ \emph{blocked}, $b>1$).

Let \HQR\ and \HQRb\ denote the un/blocked right\-/looking variants, specialized to use compact representations (\cref{sec:QR:basis-kernel});
their costs are summarized in the first rows of \cref{tab:costcomp1D,tab:costcomp2D}.
\HQRb\ invokes \HQR\ as its base case.
For \HQR\ we use a 1D processor grid and for \HQRb\ we use a 2D processor grid.
For \HQR\ we distribute matrices similar to \ODQR\ and for \HQRb\ we distribute matrices (2D-) block-cyclically with $b \times b$ blocks:
the distribution block size matches the algorithmic block size.
We parallelize \HQR\ and the base case of \HQRb\ to match the distribution of $\A$, analogous to \ODQR.
We parallelize \HQRb's inductive\-/case matrix multiplications to match the output matrix distribution (a 2D parallelization).
In the case of \HQR\ we assume $P = O(m)$.
In the case of \HQRb, 
we choose an $r \times c$ processor grid with $c = \Theta ( (nP/m)^{1/2} )$ and $r = \Theta(P / c)$,
and we choose $b = \Theta(1)$.
Assuming $P = \Omega(m \slash n)$ and
$P \cdot (\log P)^2 = O(m \cdot n)$,
these choices are valid and simultaneously minimize all three costs, asymptotically.

%

\CAQR\ \cite{DGHL12} modifies \HQRb\ to invoke \TSQR\ (\cref{sec:CAQR}) in the base case.
Our algorithms make crucial use of the \TSQR\ enhancements in \cite{BDGJKN15}; additionally, we use that paper's improved \CAQR\ in the following comparison. 
We parallelize and distribute data for \TSQR\ as discussed in \cref{sec:CAQR}, and for \CAQR's inductive case as we did for \HQRb's.
\TSQR\ and \CAQR's costs are summarized in the second rows of \cref{tab:costcomp1D,tab:costcomp2D}.
In the case of \TSQR\ we assume $P \le m/n$.
In the case of \CAQR\ we use the same $r \times c$ grid as for \HQRb\ but now pick $b = \Theta( n / (nP/m)^{1/2} )$.
Assuming $P \slash (\log P)^2 = \Omega(m \slash n)$ and
$P \cdot (\log P)^2 = O(m \cdot n)$,
these choices are valid and simultaneously minimize all three costs, asymptotically.

The costs of the new algorithms, \ODQR\ and \TDQR\, appear in the third rows of \cref{tab:costcomp1D,tab:costcomp2D}.
To make the comparison between \TSQR\ and \ODQR\ more clear, for the latter we use $b = \Theta(n / (\log P)^\epsilon)$ in \cref{thm:1DUB}'s proof, 
allowing the parameter $\epsilon$ to vary over $[0,1]$, 
justified by the stronger constraint $P (\log P)^{2\epsilon} = O\lt( n^2 \rt)$.
For \TDQR\ we follow \cref{thm:3DUB}'s proof and hypotheses.


\subsection{Elimination By Blocks}
\label{sec:tiskin}

%
%

Tiskin \cite{T07}, working in the BSP model \cite{V90}, 
proposed an algorithm outside of the \rQR\ framework which demonstrates a similar bandwidth\slash latency tradeoff as \TDQR.
Tiskin's algorithm was designed only for square matrices, but it has been extended to rectangular matrices, using the original algorithm as a black-box \cite{SBDH17}.
This extension achieves BSP bandwidth cost $O\lt(n^2/(nP/m)^\delta\rt)$ and BSP synchronization cost $O\lt((nP/m)^\delta (\log P)^2\rt)$.
Despite the fact that, in BSP, \TDQR\ achieves these same communication costs, we still believe \TDQR\ is a valuable contribution for multiple reasons. 
Defining a data distribution is a nontrivial and crucial step in developing a distributed-memory implementation using, e.g., MPI.
The aforementioned BSP algorithms do not (and need not) explicitly specify their data distributions.
Second, our algorithms are based on Householder's algorithm and use Householder representation. 
Thus, they are readily assembled from robust, tuned subroutines in standard libraries like (P)BLAS and (Sca)LAPACK.
Additionally, all interprocessor communication in our algorithms is expressed in terms of standard MPI collectives.
Lastly, we feel that Tiskin's recursive scheme, based on a slope-2 wavefront and `pseudopanels', is much more demanding from an implementation perspective than Elmroth-Gustavson's (\QREG).   
To our knowledge no one has implemented Tiskin's algorithm.

\subsection{Lower Bounds}
\label{sec:lbs}
Let us continue the notation from \cref{sec:compare}.
The algorithms studied there are all subject to an arithmetic lower bound of $\Omega( mn^2 / P)$ \cite{DGHL12}.

In the tall-skinny case, 
we have bandwidth and latency bounds $\Omega( n^2 )$ and $\Omega(\log P)$ \cite{CHPV07,BCDHKS14}.
\HQR\ attains the arithmetic lower bound,
but misses the bandwidth and latency lower bounds by $\Theta( \log P)$ and $\Theta( n )$.
\TSQR\ attains the arithmetic lower bound assuming $P \log P = \Omega (m / n)$, but 
misses the bandwidth and latency lower bounds both by $\Theta( \log P)$.
\ODQR\ attains the latency lower bound when $\epsilon = 0$, the arithmetic lower bound when $\epsilon \le 1/2$,
and the bandwidth lower bound when $\epsilon \ge 1$.

In the (close to) square case, 
we have bandwidth and latency bounds $\Omega( n^2 / (nP/m)^{2/3}$ and $\Omega( (nP/m)^{1/2})$ \cite{BCDHKS14}.
We restrict parameters so that both \CAQR\ and \TDQR\ attain the arithmetic lower bound, like \HQRb.
\HQRb\ and \CAQR\ exceed the bandwidth lower bound both by a factor of $\Theta( (nP/m)^{1/6})$ and 
they exceed the latency lower bound by factors of
$\Theta\lt( n / (nP/m)^{1/2} \log P\rt)$ and $\Theta\lt( (nP/m)^{1/6} (\log P)^2\rt)$, resp.
\TDQR\ attains the bandwidth lower bound when $\delta = 2/3$, and exceeds the latency lower bound by just $\Theta( (\log P)^2 )$ when $\delta = 1/3$.

We did not prove that \TDQR's bandwidth\-/latency product is optimal --- i.e., that the tradeoff is inevitable --- although we conjecture this to be the case.
Our intuition is based on bandwidth\slash latency tradeoffs observed in computations whose dependence graphs have similar diamond-shaped substructures: see, e.g.,
\cite{PU87,SCKD16}. 

\subsection{Limitations and Extensions}

Our main upper bound \cref{thm:3DUB} is substantially limited by its restrictions on permissible parallelism: see \cref{eq:3DUB-hyp}.
\TDQR's \AlltoAll{s} are responsible for these constraints: if we supposed the \AlltoAll{s} had zero cost, \cref{eq:3DUB-hyp} could be weakened to \cref{eq:3DQR-bstarconst}.
We make three remarks about improving this aspect of our work.

First, the bound used is worst-case; our knowledge of (and control over) data distribution could lead to stronger bounds. 
%
Second, it may be that the index algorithm is suboptimal for the data distribution, e.g., many $B_{pq}=0$ and a specialized algorithm would perform less communication, or at least yield sharper cost bounds.
Third and more generally, we should optimize for the data distribution before and after each subroutine.
%
%
The constraints are the balance assumptions to invoke \cref{lem:3DMM} and \ODQR. 
This is a difficult combinatorial problem.


The constants hidden in our asymptotic analysis are practically important, and the precise choices of parameters for particular machines warrants further study.
We have also omitted a number of practical optimizations that do not affect our asymptotic analysis.
For example, recall from \cref{sec:QR:basis-kernel} that $\T$ can be reconstructed from $\V$.
If the full $\T$ is not desired, by replacing the top level of recursion with a right\-/looking iterative \QREG\ variant, we can avoid ever computing superdiagonal blocks of $\T$; this does, however, restrict the available parallelism~\cite{EG00}.

\section*{Acknowledgments}

G. Ballard was supported by the National Science Foundation (NSF) Grant No. ACI-1642385.
The work of L. Grigori was supported by the NLAFET project as part of the European Union's Horizon 2020 research and innovation program under grant 671633.
Support for M. Jacquelin was provided in part through the Scientific Discovery through Advanced Computing (SciDAC) program funded by the US Department of Energy (DOE), Office of Science, Advanced Scientific Computing Research under Contract No. DE-AC02-05CH11231.

\bibliographystyle{alpha}

\newcommand{\etalchar}[1]{$^{#1}$}

\appendix

\section{Proof of Lemma 1}
\label{sec:collectives}

In \cref{sec:collectives} we give a more detailed proof of \cref{lem:collectives}.
First in \cref{sec:coll-mst} we review the \emph{binomial\-/tree algorithms} for \Scatter, \Gather, \Broadcast, \Reduce, and \AllReduce.
Then in \cref{sec:coll-bde} we review the \emph{bidirectional\-/exchange algorithms} for \ReduceScatter, \AllGather, \Broadcast, \Reduce, and \AllReduce.
Finally in \cref{sec:coll-ata} we review the \emph{(radix-2) index algorithm} for \AlltoAll, and a two-phase variant that admits sharper bounds when the block sizes vary widely.
For the four collectives with two algorithmic variants, we report the smaller of the two upper bounds in \cref{tab:collectives}.

We do not claim the chosen algorithms are optimal in our model or in practice.
When developing a high\-/performance implementation we would invoke the corresponding subroutines in a tuned MPI library.

\subsection{Binomial Tree Algorithms}
\label{sec:coll-mst}

The binomial\-/tree \Scatter\ algorithm proceeds as follows.
The algorithm terminates immediately when $P=1$; when $P >1$, we split the $P$ processors into two sets, of size $\lceil P/2 \rceil$ and $\lfloor P/2\rfloor$, and pick a processor~$r'$ in the set not containing processor~$r$.
Processor $r$ sends all blocks it owns destined for the opposite subset to processor~$r'$.
Processors~$r$ and~$r'$ then become the roots of two smaller \Scatter{s} among their respective processor subsets, which proceed in parallel.
An upper bound on the runtime satisfies
\[ T(P, B) \le T(\lceil P/2\rceil,\, B) + \beta\cdot \lceil P/2 \rceil B + \alpha\cdot 2\text, \]
which simplifies to 
\begin{equation}
\label{eq:ScatterUB}
\beta \cdot O\lt( (P-1)B\rt) + \alpha \cdot O\lt(\log P\rt)\text.
\end{equation}

The binomial tree \Broadcast\ algorithm is an optimization of the \Scatter\ algorithm when all blocks are identical: each message contains exactly one block, so the upper bound simplifies to 
\begin{equation}
\label{eq:BroadcastUB}
\beta \cdot O\lt( B\log P\rt) + \alpha \cdot O\lt(\log P\rt)\text.
\end{equation}

The binomial tree \Gather\ algorithm reverses the \Scatter's communication pattern, using head- (vs.\ tail-) recursion.
That is, after the recursive \Gather{s} with roots $r$ and $r'$ complete, processor~$r'$ sends processor~$r$ all blocks it owns.
An upper bound on the runtime satisfies the same recurrence as before, reproducing \cref{eq:ScatterUB}

The binomial tree \Reduce\ algorithm is an optimization of the \Gather\ algorithm when blocks are added as soon as they are received: each message contains exactly one block, so the upper bound simplifies to 
\begin{equation}
\label{eq:ReduceUB}
\gamma\cdot O\lt( B \log P\rt) + \beta \cdot O\lt( B\log P\rt) + \alpha \cdot O\lt(\log P\rt)\text.
\end{equation}

The binomial tree \AllReduce\ algorithm is a \Reduce\ followed by a \Broadcast, so its runtime also satisfies \cref{eq:ReduceUB}.

\subsection{Bidirectional Exchange Algorithms}
\label{sec:coll-bde}

The bidirectional exchange \ReduceScatter\ algorithm proceeds as follows.
The algorithm terminates immediately when $P=1$; when $P > 1$, we split the $P$ processors into two sets, of size $\lceil P/2 \rceil$ and $\lfloor P/2\rfloor$, and pair each processor with a processor in the other set.
Each processor sends its paired processor all blocks it owns destined for the other set.
If the sets differ in size, one processor~$p$ in the smaller set is paired with two processors $q,q'$ in the larger set:
processor~$p$ only sends to one of the two, but receives from both.
After each exchange, each processor adds the blocks it receives to the ones it already owns with the same destinations.
Each set then performs a smaller \ReduceScatter.
An upper bound on the runtime satisfies 
\begin{multline*} 
T(P, B) \le T(\lceil P/2\rceil,\, B) \\ + \gamma \cdot 2\lfloor P/2\rfloor + \beta \cdot\lt(\lceil P/2 \rceil + 2\lfloor P/2 \rfloor\rt) + \alpha \cdot 3\text,
\end{multline*}
which simplifies to
\[ \gamma\cdot O\lt( (P-1)B \rt) + \beta \cdot O\lt( (P-1)B\rt) + \alpha \cdot O\lt(\log P\rt) \text.\]

The bidirectional exchange \AllGather\ algorithm reverses the \ReduceScatter's communication pattern, using head- (vs.\ tail-) recursion.
That is, after the recursive \AllGather{s} complete, each processor sends its paired processor all blocks it owns destined for the other set, and if the sets differ in size, processor~$p$ only receives from one of $q,q'$, but sends to both.
Since blocks with the same sources are identical, each processor only sends only one copy of each. 
An upper bound on the runtime satisfies the same recurrence as before, except without the arithmetic cost, reproducing \cref{eq:ScatterUB}.

When $B$ is sufficiently large with respect to $P$, it is beneficial implement \Broadcast, \Reduce, and \AllReduce\ with the preceding bidirectional exchange algorithms, splitting the original blocks into new blocks of size at most $\lceil B/P\rceil$.
The bidirectional exchange \Broadcast\ algorithm is a \Scatter\ followed by an \AllGather, with runtime
\begin{equation}
\label{eq:BDEBroadcastUB}
\beta \cdot O\lt( (P-1)\lceil B/P \rceil\rt) + \alpha \cdot O\lt(\log P\rt) \text.
\end{equation}
The bidirectional exchange \Reduce\ algorithm is a \ReduceScatter\ followed by a \Gather, with runtime
\begin{multline}
\label{eq:BDEReduceUB}
\gamma\cdot O\lt( (P-1)\lceil B/P\rceil \rt) + \beta \cdot O\lt( (P-1)\lceil B/P \rceil\rt) \\+ \alpha \cdot O\lt(\log P\rt) 
\text.
\end{multline}
The bidirectional exchange \AllReduce\ algorithm is a \ReduceScatter\ followed by an \AllGather, so its runtime also satisfies \cref{eq:BDEReduceUB}.

\subsection{All-to-All}
\label{sec:coll-ata}

We consider an \AlltoAll\ algorithm called the \emph{radix-2 index algorithm} \cite{BHKUW97}.
Processors are numbered from $0$ and $P{-}1$, and each block is labeled $q{-}p \bmod P$, where $p$ is the source processor number and $q$ is the destination processor number.
Block labels are encoded in binary, as bit-strings of length $d = \lceil \log_2 P\rceil$, starting from their least significant bit. 
For each step $i = 0,1,\ldots,d{-}1$, each processor $p$ sends processor $p{+}2^i \bmod P$ a single message containing all blocks it currently owns whose labels' $i$-th bits are nonzero, at most $\lceil P/2 \rceil$ blocks. 
After $d$ steps, all blocks have arrived at their destination. 
The runtime is thus
\[ \beta \cdot O\lt( BP\log P \rt) + \alpha \cdot O\lt( \log P\rt)\text. \]

All \AlltoAll{}s in this work use a two-phase approach~\cite{HBJ96} that explicitly addresses variable block\-/sizes.
Each processor~$p$ starts by balancing its outgoing blocks, dealing the $B_{pq}$ elements of its original block destined for each processor~$q$ into new blocks destined for processors~$p{+}q$, $p{+}q{+}1$,~$\ldots$, and so on, cyclically.
The processors then perform two \AlltoAll{}s, the first to route the new blocks to intermediate processors, and the second to route elements to their original destinations. 
The largest block\-/sizes $B'$ and $B''$ in the first and second \AlltoAll{}s are bounded,
\begin{align*} 
B' &\le \frac{P-1}{2} + \max_q \sum_p \frac{B_{pq}}{P}
\\
B'' &\le \frac{P-1}{2} + \max_p \sum_q \frac{B_{pq}}{P}\text.
\end{align*}
The overall runtime is thus bounded by
\[ \beta \cdot O\lt(\lt(B_*+P^2\rt)\log P\rt) + \alpha\cdot O\lt(\log P\rt)\text,\]
where $B_* \le B P$ is the maximum number of words any processor holds before or after the collective,
\[ B_* = \max\lt(\max_q \sum_p B_{pq},\ \max_p \sum_q B_{pq}\rt)\text.\]

\section{Proof of Lemmas 2, 3, and 4}
\label{sec:MM}

Here we provide a more comprehensive analysis of matrix multiplication than in \cref{sec:3DMM}.

\subsection{Generic Algorithm}

Consider multiplying an $I \times K$ matrix $\A$ with a $K \times J$ matrix $\B$ to obtain an $I \times J$ matrix $\MC$, via the usual (entrywise) formula: for all $(i,j) \in [I]\times[J]$,
\[ \MC_{ij} = \sum_{k \in [K]} \A_{ik} \B_{kj} \text.\]

We identify each of the $IJK$ (scalar) multiplications with a point $(i,j,k)$ in three\-/dimensional Euclidean space, so the set of multiplications defines a discrete $I\times J \times K$ brick. 
We will parallelize the multiplications by arranging the processors in a three\-/dimensional grid and assigning each processor a sub-brick. 

A complication is that the admissible processor grids depend on the integer factorizations of the number $P$ of processors: for flexibility, we will allow some processors to remain idle.
That is, writing $P = QRS + T$ for any positive integers $Q$, $R$, and $S$ and nonnegative integer $T$, we arrange $QRS$ processors in a $Q \times R \times S$ grid, indexing each by a triple $(q,r,s)$, and set the remaining $T$ processors aside. 

Having fixed a processor grid, let
$\{\II_{q}\}_{q \in [Q]}$, 
$\{\JJ_{r}\}_{r \in [R]}$, and 
$\{\KK_{s}\}_{s \in [S]}$
denote partitions of $[I]$, $[J]$, and $[K]$.
Assign each grid processor $(q,r,s)$ the sub-brick $\II_{q}\times\JJ_{r}\times\KK_{s}$, meaning the (sub-) matrix product
\[ \M{Z}_{\II_q,\JJ_r,s} = \A_{\II_{q},\KK_{s}}\cdot\B_{\KK_{s},\JJ_{r}} \text. \]
Slices of the $I\times J \times S$ tensor $\M{Z}$ are partial sums of the output matrix: in particular, for all $(q,r) \in [Q]\times[R]$,
\[ \MC_{\II_{q},\JJ_{r}} = \sum_{s \in [S]} \M{Z}_{\II_q,\JJ_r,s} \text.\]
We will parallelize these remaining $I J(S{-}1)$ additions as part of data redistribution.

The initial distributions of $\A$ and $\B$, and the final distribution of $\MC$, are chosen as follows.
For each $(q,s) \in [Q]\times[S]$, $\A_{\II_q,\KK_s}$ is partitioned entrywise across the grid processors $(q,\cdot,s)$, and,
for each $(r,s) \in [R] \times[S]$, $\B_{\KK_s,\JJ_r}$ is partitioned entrywise across the grid processors $(\cdot,r,s)$.
We identify the matrix entries that grid processor $(q,r,s)$ owns with the sets 
$\cA_{q,r,s}\subseteq \II_q \times \KK_s$ 
and 
$\cB_{q,r,s} \subseteq \KK_s \times \JJ_r$.

During the computation, each of the matrices $\A$, $\B$, and $\MC$ undergoes a redistribution, using \AllGather{}s or \ReduceScatter{}s along fibers of the processor grid.

First,
for each $(q,s) \in [Q]\times[S]$, we perform an \AllGather\ among the grid processors $(q,\cdot,s)$ so that each obtains a copy of $\A_{\II_q,\KK_s}$;
these $QS$ \AllGather{}s, each involving a disjoint set of $R$ processors, occur simultaneously.
Next, 
for each $(r,s) \in [R]\times[S]$, we perform an \AllGather\ among the grid processors $(\cdot,r,s)$ so that each obtains a copy of $\B_{\KK_s,\JJ_r}$; similarly,
these $RS$ \AllGather{}s, each involving a disjoint set of $Q$ processors, occur simultaneously.

At this point, each grid processor $(q,r,s)$ can evaluate its local matrix product, obtaining the matrix $\M{Z}_{\II_q,\JJ_r,s}$ as explained above.

Finally, 
to construct the output matrix, 
for each $(q,r) \in [Q]\times[R]$, we perform a \ReduceScatter\ among the grid processors $(q,r,\cdot)$, computing $\MC_{\II_q,\JJ_r}$ while partitioning it entrywise across these processors;
these $QR$ \ReduceScatter{}s, each involving a disjoint set of $S$ processors, can occur simultaneously.
We identify the matrix entries that grid processor $(q,r,s)$ owns after these collectives with the set $\cC_{q,r,s}\subseteq \II_q \times \JJ_r$.

\subsection{Analysis}
\label{sec:3DMM:cost}

We denote by \MM\ the multiplication of an $I \times K$ matrix with a $K \times J$ matrix, both stored locally on one processor: this is the special case of \TDMM\ on a $1\times1\times1$ processor grid.
Using the conventional algorithm, which involves $IJK$ multiplications and $IJ(K{-}1)$ additions, \MM's runtime is bounded, \[ T_\MM\lt(I,J,K\rt) = \gamma\cdot O\lt(IJK\rt)\text.\]
(This was \cref{lem:0DMM}.)

The algorithmic costs of \TDMM\ are due to the \MM{s}, \AllGather{s}, and \ReduceScatter{s}. 
The costs of these subroutines, in turn, depend on the splitting $P=QRS+T$ and the six partitions,
\[\begin{array}{c c c c c c c}
\{\II_q\}_q & \text{of} & [I]\text, & \ & \{\cA_{q,r,s}\}_{q,r,s} & \text{of} & [I] \times [K]\text, \\
\{\JJ_r\}_r & \text{of} & [J]\text, & \ & \{\cB_{q,r,s}\}_{q,r,s} & \text{of} & [K] \times [J]\text, \\
\{\KK_s\}_s & \text{of} & [K]\text, & \ & \{\cC_{q,r,s}\}_{q,r,s} & \text{of} & [I] \times [J]\text.
\end{array}\]

\begin{lemma}
\label{lem:3DMM-gen}
\TDMM's runtime is bounded by
\begin{align*}
&\max_{q,r,s}\ T_\MM\lt(|\II_q|,\,|\JJ_r|,\,|\KK_s|\rt) \\+\;\;
&\max_{q,s}\ T_\AllGather\lt(\{\cA_{q,r,s}\}_r \rt) \\+\;\;
&\max_{r,s}\ T_\AllGather\lt(\{\cB_{q,r,s}\}_q \rt) \\+\;\;
&\max_{q,r}\ T_\ReduceScatter\lt(\{\cC_{q,r,s}\}_s \rt)
\text.
\end{align*}
\end{lemma}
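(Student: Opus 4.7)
The plan is to trace through the four sequential phases of the \TDMM\ algorithm described in the preceding subsection, bound the runtime contribution of each phase, and sum those bounds. Since each phase must complete before the next one begins, the total runtime is at most the sum of the per\-/phase runtimes, and within each phase the collectives execute on disjoint sets of processors, so the contribution of the phase is the maximum runtime over those simultaneous collectives.

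More concretely, I would first recall the four phases, in order: (i) for each $(q,s) \in [Q]\times[S]$, the \AllGather\ of $\A_{\II_q,\KK_s}$ among grid processors $(q,\cdot,s)$; (ii) for each $(r,s) \in [R]\times[S]$, the \AllGather\ of $\B_{\KK_s,\JJ_r}$ among grid processors $(\cdot,r,s)$; (iii) the local \MM\ on each grid processor $(q,r,s)$, producing $\M{Z}_{\II_q,\JJ_r,s}$; and (iv) for each $(q,r) \in [Q]\times[R]$, the \ReduceScatter\ of $\MC_{\II_q,\JJ_r}$ among grid processors $(q,r,\cdot)$. I would argue that in phase~(i) the processor subsets $\{(q,\cdot,s)\}$ partition the active processors as $(q,s)$ ranges over $[Q]\times[S]$, so these \AllGather{s} run in parallel without contention and the phase runtime is the maximum of their individual runtimes, namely $\max_{q,s} T_\AllGather(\{\cA_{q,r,s}\}_r)$. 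A symmetric argument handles phase~(ii), and phase~(iv) follows in the same way using the disjointness of the fibers $\{(q,r,\cdot)\}$. For phase~(iii) each processor works independently, so the phase runtime equals the maximum local \MM\ runtime over all $(q,r,s)$.

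Appealing to the model in \cref{sec:model}, where runtime is the maximum weight over paths in the task DAG, I would note that any critical path through the execution visits at most one task per phase per processor, so its total weight is bounded by the sum of the per\-/phase maxima. Summing the four bounds gives exactly the expression in the lemma statement, and the idle processors (the $T$ set aside) contribute no additional cost.

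The only potentially subtle step is justifying that the simultaneous collectives within a single phase truly do not interfere; since each collective operates on a grid fiber and the fibers in one phase partition the active processors, no processor participates in two collectives of the same phase at once, so the maximum of the individual costs is a valid bound rather than their sum. Beyond that, the proof is a routine sequential composition argument, so I do not anticipate a major obstacle.
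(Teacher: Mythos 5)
Your argument is correct and mirrors the reasoning the paper leaves implicit: the paper states this lemma without an explicit proof, immediately after the four\-/phase algorithm description and the fiber decomposition you invoke. One small imprecision worth flagging: a critical path does not visit ``at most one task per phase per processor'' --- within a single collective (e.g., a binomial\-/tree \Gather) a processor typically performs several sends and receives, and a DAG path can traverse several of them on the same processor. What actually holds, and what your disjointness observation supports, is that the collectives in each phase run on pairwise disjoint processor fibers with no inter\-/collective edges, so any path's segment within a phase lies entirely inside one collective and its weight is bounded by that collective's runtime; summing the per\-/phase maxima over the four sequential phases then yields the stated bound, exactly as you conclude.
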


Upper bounds on $T_\AllGather$ and $T_\ReduceScatter$ are given in \cref{lem:collectives}.
However, those results are pessimistic when the block-sizes vary.
For example, on a one\-/dimensional processor grid, an \AllGather\ or \ReduceScatter\ simplifies to a \Broadcast\ or \Reduce\ and sharper bounds apply.
This situation arises in \ODQR, in whose analysis we will additionally constrain the number of processors to ensure that the arithmetic and bandwidth costs of the \Broadcast\slash\Reduce\ are independent of $P$.
\cref{lem:1DMM} is a corollary of \cref{lem:3DMM-gen} that summarizes this special case.

\begin{proof}[of \cref{lem:1DMM}]
Suppose first that $K = \max(I,J,K)$.
Pick $Q=R=1$ and $S = P$, which is a valid $Q \times R \times S$ processor grid since $Q,R,S$ are positive integers with $QRS \le P$.
Fix any balanced partitions $\{\II_q\}_q$, $\{\JJ_r\}_r$, and $\{\KK_s\}_s$.
Take the partition $\{\cA_{q,r,s}\}_{q,r,s}$ (resp., $\{\cB_{q,r,s}\}_{q,r,s}$) to be a union of balanced $R$-way (resp., $Q$-way) partitions of the sets $\II_q \times \KK_s$ ($(q,s) \in [Q]\times[S]$), resp., $\KK_s \times \JJ_r$ ($(s,r) \in [S]\times[R]$).  
Finally, take $\cC_{q,r,s}=\emptyset$ for all but one $(q,r,s)$.
In this scenario, the \AllGather{s} are trivial (involving one processor each), and the single \ReduceScatter\ is just a \Reduce.
The arithmetic cost (\MM{s} and \Reduce) is
\[ 
\gamma\cdot O\lt( IJ\lt\lceil\frac{K}{P}\rt\rceil + \min(IJ\log P, IJ + P)\rt)
=
\gamma\cdot O\lt( \frac{IJK}{P} \rt)
\text.
\]
The bandwidth cost (\Reduce) is
\[
\beta\cdot O\lt(\min(IJ\log P,IJ+P)\rt) 
= 
\beta\cdot O\lt(IJ \rt)
\text.
\] 
The latency cost (\Reduce) is $\alpha\cdot O\lt(\log P\rt)$.

Now suppose that $I$ (resp., $J$)~$= \max(I,J,K)$ 
Pick $R=S=1$ and $Q=P$ (resp., $Q=S=1$ and $R=P$).
Again take any balanced partitions $\{\II_q\}_q$, $\{\JJ_r\}_r$, and $\{\KK_s\}_s$.
Similarly to before, take the partitions $\{\cA_{q,r,s}\}_{q,r,s}$ and $\{\cC_{q,r,s}\}_{q,r,s}$ (resp., $\cB_{q,r,s}$ and $\cC_{q,r,s}$)
to be unions of balanced $R$- (resp., $Q$-) and $S$-way partitions, 
while taking take $\cB_{q,r,s}$ (resp., $\cA_{q,r,s}$)~$=\emptyset$ for all but one $(q,r,s)$.
In this scenario, the \AllGather{s} of the left (resp., right) factor and the \ReduceScatter{s} are trivial (involving one processor each), and the single \AllGather\ of the right (resp., left) factor is just a \Broadcast.
The arithmetic cost is $\gamma \cdot O\lt( \lt\lceil I/P \rt\rceil JK\rt)$, resp., $\gamma\cdot O\lt( I\lt\lceil J/P\rt\rceil K\rt)$, which~$ = \gamma\cdot O\lt( IJK/P \rt)$.
Similar to the first case, the bandwidth cost is $\beta \cdot O\lt(JK\rt)$, resp.,  $\beta \cdot O\lt(IK\rt)$,
and the latency costs are both $\alpha \cdot O\lt( \log P\rt)$.
\end{proof}

\cref{lem:3DMM} is another corollary of \cref{lem:3DMM-gen}, applicable for matrix and processor grid dimensions
sufficiently large to admit parallelizations where each processor is assigned a roughly cubical sub-brick of the $I\times J \times K$ computation brick.
This situation arises in \TDQR.
\begin{proof}[of \cref{lem:3DMM}]
Pick 
$Q = \lfloor I / \rho \rfloor$, 
$R = \lfloor J / \rho \rfloor$, and
$S = \lfloor K / \rho \rfloor$, 
where
$\rho = (IJK/P)^{1/3}$.
Then
\begin{gather*}
\lt(1-1/c\rt)^3 P \le QRS \le P \\
1 \le \rho \le \frac{I}{Q},\frac{J}{R},\frac{K}{S} \le \frac{c}{c-1} \rho\text.
\end{gather*}
This defines a valid $Q \times R \times S$ processor grid, since $Q,R,S$ are positive integers with $QRS \le P$.
Take any balanced partitions $\{\II_q\}_q$, $\{\JJ_r\}_r$, and $\{\KK_s\}_s$.
Take the partition $\{\cA_{q,r,s}\}_{q,r,s}$ to be a union of balanced $R$-way partitions of the sets $\II_q \times \KK_s$ ($(q,s) \in [Q]\times[S]$), and similarly for the partitions $\{\cB_{q,r,s}\}_{q,r,s}$ and $\{\cC_{q,r,s}\}_{q,r,s}$.
The arithmetic cost (from \MM{s} and \ReduceScatter{s}) is
\begin{multline*}
\gamma\cdot O\lt(\lt\lceil\frac{I}{Q}\rt\rceil\lt\lceil\frac{J}{R}\rt\rceil\lt\lceil\frac{K}{S}\rt\rceil+(S-1)\lt\lceil\frac{\lt\lceil\frac{I}{Q}\rt\rceil\lt\lceil\frac{J}{R}\rt\rceil}{S}\rt\rceil\rt)
\\= \gamma\cdot O\lt( \frac{IJK}{P} \rt)
\text,
\end{multline*}
where a left-to-right reading of this equality introduces the assumption on $P$ from the theorem statement and fixes $Q,R,S$ as in the preceding paragraph.
The bandwidth cost (from \AllGather{s}\slash\ReduceScatter{s}) is
\begin{multline*}
\beta\cdot O\Bigg(
(R-1)\lt\lceil\frac{\lt\lceil\frac{I}{Q}\rt\rceil\lt\lceil\frac{K}{S}\rt\rceil}{R}\rt\rceil   +
(Q-1)\lt\lceil\frac{\lt\lceil\frac{J}{R}\rt\rceil\lt\lceil\frac{K}{S}\rt\rceil}{Q}\rt\rceil \\+
(S-1)\lt\lceil\frac{\lt\lceil\frac{I}{Q}\rt\rceil\lt\lceil\frac{J}{R}\rt\rceil}{S}\rt\rceil \Bigg)
 = \beta\cdot O\lt(\lt(\frac{IJK}{P}\rt)^{2/3}\rt)\text, 
\end{multline*}
where a left-to-right reading is as before.
The latency cost (from \AllGather{s}\slash\ReduceScatter{s}) is
\[ 
\alpha\cdot O\lt(\log R + \log Q + \log S \rt) 
=\alpha\cdot O\lt( \log P\rt)
\text.
\]
\end{proof}

\section{Proof of Lemma 5}
\label{sec:TSQR}

Here we provide relatively self-contained description of \TSQR.
To establish our asymptotic claims it suffices to use a simplified version of \TSQR, 
which we present in \cref{sec:CAQR-up,sec:CAQR-down} and analyze in \cref{sec:CAQR-an}.
Proofs of correctness and numerical stability can be found in~\cite{BDGJKN15}.

Recall that on input, the $m \times n$ matrix $\A$ is partitioned across the $P$ processors so that each processor~$p$ owns $m_p \ge n$ rows, not necessarily contiguous.
Thus we require $\A$ be sufficiently tall and skinny: $m/n \ge P$.
A single processor~$r$, which owns $\A$'s $n$ leading rows, is designated as the \emph{root processor}.

On output, the Q-factor is stored in Householder representation $(\V,\,\T)$, where $\V$ has the same distribution as $\A$.
Both $\T$ and the R-factor are returned only on the root processor.

Looking just at its communication pattern, \TSQR\ resembles a \Reduce\ followed by a \Broadcast, the distinction being the local arithmetic performed before and after each exchange.  

\subsection{Upsweep}
\label{sec:CAQR-up}

At the start of \TSQR\, each processor~$p$ performs a QR decomposition of $\A_p$, its $m_p \times n$ submatrix of $\A$.
\[ \lt(\V^{(0)}_p,\,\T^{(0)}_p,\,\R^{(0)}_p\rt) = \textsc{local-QR}\lt(\A_p\rt)\text.\]
The subroutine \textsc{local-QR} is unspecified other than that it computes a QR decomposition of an $\mu \times \nu$ matrix ($\mu \ge \nu$), stored locally, in $O(\mu\nu^2)$ operations, and returns the Q- and R\-/factors in the compact representations described in \cref{sec:QR:basis-kernel}.

Next, the processors perform a \Reduce\ using the binomial tree algorithm (\cref{sec:coll-mst}) with root~$r$ and blocks $\{\R_p\}_p$, meaning the block\-/size is $n(n+1)/2$. 
However, instead of adding the blocks elementwise after each exchange, we perform local QR decompositions:
\[ \lt(\V^{(\ell)}_p,\,\T^{(\ell)}_p,\,\R^{(\ell)}_p\rt) = \textsc{local-QR}\lt(\begin{bmatrix} \R^{(\ell-1)}_p \\\ \R^{(\ell-1)}_q\end{bmatrix}\rt)\text,\]
where $\R^{(\ell-1)}_p$ is processor~$p$'s R-factor from its previous QR decomposition and $\R^{(\ell-1)}_q$ is the R-factor it just received from some other processor~$q\ne p$.
Each processor~$p$ keeps the Q\-/factors (in basis-kernel representation) it produces --- they will be used subsequently --- but the R\-/factors are destroyed once they are sent on to another processor. 
At the end of the \Reduce, each processor stores between $1$ and $L = \lceil \log_2 P\rceil$ intermediate Q\-/factors, and processor~$r$ also stores $\R^{(L)}_r$, an R\-/factor of a QR decomposition of $\A$.

\subsection{Downsweep}
\label{sec:CAQR-down}

In principle, we can recover the Q-factor (in Householder representation) almost directly from $\A$ and the R-factor: see \cite[Section~4.4]{BDGJKN15} for a survey of approaches.
However, numerical issues motivate taking an additional step, recovering the leading $n$ columns of the Q-factor, which can be done stably and efficiently by applying the tree of Q-factors to a set of $n$ identity columns~\cite{DGHL12}.
This resembles a \Broadcast\ using the binomial tree algorithm (\cref{sec:coll-mst}) with root~$r$ and block\-/size $n^2$, reversing the communication pattern of the \Reduce.
Unlike a typical \Broadcast, however, the block's contents change each time it is sent to another processor.
That is, whenever processor~$p$ received a block from processor~$q$ during the \Reduce, processor~$p$ now computes, 
\[ \begin{bmatrix}\M{B}^{(\ell-1)}_p \\ \M{B}^{(\ell-1)}_q\end{bmatrix} = \lt(\M{I}-\V^{(\ell)}_p \T^{(\ell)}_p (\V^{(\ell)}_p)^H\rt)\begin{bmatrix}\M{B}^{(\ell)}_p\\\M{0}\end{bmatrix}\]
and then sends $\M{B}^{(\ell)}_q$ to processor~$q$.
To start, the root processor~$r$ sets $\M{B}^{(L)}_r = \M{I}$.

Next, each processor~$p$ computes 
\[ \M{W}_p = \lt(\M{I}-\V^{(0)}_p \T^{(0)}_p (\V^{(0)}_p)^H\rt)\lt[\begin{smallmatrix}\M{B}^{(0)}_p\\\M{0}\end{smallmatrix}\rt]\text.\]
The matrix $\M{W}$, defined by the submatrices $\{\M{W}_p\}_p$ just as $\A$ is defined by $\{ \A_p \}_p$, is the rightmost $m\times n$ submatrix of the Q-factor associated with the R-factor obtained by the \Reduce.

It remains to recover a Householder representation of the Q-factor from $\M{W}$.
For numerical stability, we exploit the non-uniqueness of a QR decomposition $(\Q,\R)$ of $\A$: 
$(\Q\M{Z},\M{Z}^H \R)$ is also a QR decomposition of $\A$ for any unitary matrix $\M{Z}$ with $2 \times 2$ block-diagonal structure whose leading block is an $n \times n$ diagonal matrix.
Processor~$r$ row-reduces $\M{X}$, the upper $n\times n$ submatrix of $\M{W}_r$, in the usual approach ---
working left-to-right, eliminating each column below the diagonal by premultiplying with a unit lower triangular matrix ---
but with a modification to simultaneously populate an $n \times n$ diagonal matrix $\M{S}$.
For each $j \in [n]$, before the $j$-th column is eliminated, letting $\hat{\M{X}}$ denote the current partially reduced matrix, compute $\M{S}_{jj}=\mathrm{sgn}(\hat{\M{X}}_{jj})$, add $\M{S}_{jj}$ to $\hat{\M{X}}_{jj}$, and then proceed as usual.
(Here $\mathrm{sgn}(z)$ means $z/|z|$ for $z \ne 0$ and an arbitrary unit complex number when $z=0$.)
Since $\hat{\M{X}}_{jj}+\M{S}_{jj} \ne 0$, pivoting is unnecessary to avoid breakdown, thus this procedure terminates with both $\M{S}$ and a matrix pair $(\M{L},\M{U})$, where $\M{L}$ is unit lower triangular, $U$ is upper triangular, and $\M{L}\M{U}$ is invertible.
It is shown in~\cite[Lemma~6.2]{BDGJKN15} that $\M{X}+\M{S}=\M{L}\M{U}$ and, moreover, that partial pivoting is obviated (or performed implicitly): at each step, the diagonal element's magnitude is at least that of each element in the column below.
Processor~$r$ then computes 
$\T = \M{U}\M{S}^H\M{L}^{-H}$, 
$\V_r = \M{L}$, and 
$\R = -\M{S}^H \R^{(L)}_r$.
The processors then perform a \Broadcast\ of $\M{U}$, with root~$r$, after which each recipient processor~$p \ne r$ computes $\V_p = \M{W}_p\M{U}^{-1}$.

We are done: the Q-factor's basis $\V$ is partitioned as $\{\V_p\}_p$ across the processors commensurately with $\A$, and the root processor stores both the kernel $\T$ and the R-factor $\R$. 

\subsection{Concluding the Analysis}
\label{sec:CAQR-an}

\begin{proof}[of \cref{lem:TSQR}]
The runtime of the \textsc{local-QR}s, performed in parallel, is $\gamma \cdot O\lt(\max_p m_p n^2 \rt)$, and 
the runtime of the subsequent `\Reduce'\ is
\[ \gamma\cdot O\lt( n^3 \log P\rt) + \beta\cdot O\lt( n^2 \log P \rt) + \alpha\cdot O\lt( \log P\rt)\text.\]
Computing $\M{W}$ asymptotically matches the current total runtime.
Computing $\M{U}$ on the root processor adds $\gamma \cdot O\lt( n^3 \rt)$, then 
the subsequent \Broadcast\ and computing $\M{V}$ matches computing $\M{W}$.
\end{proof}

Note that in \cref{sec:CAQR-down} it is possible to avoid the second \Broadcast\ and computing most of $\M{W}$ by starting the first \Broadcast\ with $\M{B}^{(L)}_r = \M{U}^{-1}$ \cite{BDGJKN15}.

\end{document}